\newtheorem{theorem}{Theorem}
\newtheorem{remark}{Remark}
\begin{document}
\title{Hovering UAV-Based FSO Communications: Channel Modelling, Performance Analysis, and Parameter Optimization}

\author{Jin-Yuan Wang, \emph{Member, IEEE}, Yang Ma, Rong-Rong Lu, Jun-Bo Wang, \emph{Member, IEEE}, \\Min Lin, \emph{Member, IEEE}, and Julian Cheng, \emph{Senior Member, IEEE}
\thanks{Manuscript received XXXX; revised XXXX; accepted XXXX.
This work was supported in part by the Open Project of Shanghai Key Laboratory of Trustworthy Computing,
in part by the Opening Fundation of Key Laboratory of Opto-technology and Intelligent Control, Ministry of Education under Grant KFKT2020-06,
in part by the National Key Research and Development Program under Grant 2018YFB1801905,
in part by the National Natural Science Foundation of China under Grants 61960206005 \& 61960206006,
in part by the Jiangsu Province Basic Research Project under Grant BK20192002,
and in part by the Key International Cooperation Research Project under Grant 61720106003. (\emph{Corresponding authors: Jin-Yuan Wang, Jun-Bo Wang})}
\thanks{Part of the paper has been presented at \emph{IEEE International Conference on Communications}, Dublin, Ireland, 2020. \cite{ConfVer}.}
\thanks{Jin-Yuan Wang is with College of Telecommunications and Information Engineering, Nanjing University of Posts and Telecommunications, Nanjing 210003, China,
with Shanghai Key Laboratory of Trustworthy Computing, East China Normal University, Shanghai 200062, China,
and also with Key Laboratory of Opto-technology and Intelligent Control, Lanzhou Jiaotong University, Lanzhou 730070, China. (e-mail: jywang@njupt.edu.cn)}
\thanks{Yang Ma and Jun-Bo Wang are with National Mobile Communications Research Laboratory, Southeast University, Nanjing 210096, China. (e-mail: yangma@seu.edu.cn, jbwang@seu.edu.cn)}
\thanks{Rong-Rong Lu and Min Lin are with College of Telecommunications and Information Engineering, Nanjing University of Posts and Telecommunications, Nanjing 210003, China. (e-mail: 1019010205@njupt.edu.cn, linmin@njupt.edu.cn)}
\thanks{Julian Cheng is with the School of Engineering, The University of British Columbia, Kelowna, BC V1V 1V7, Canada. (e-mail: julian.cheng@ubc.ca)}
\thanks{Color versions of one or more of the figures in this article are available online at XXXX.}
\thanks{Digital Object Identifier: XXXX}
}

\maketitle
\begin{abstract}
Relay-assisted free-space optical (FSO) communication systems are exploited as a means to mitigate the limiting effects of the turbulence induced atmospheric scintillation.
However, conventional ground relays are stationary,
and their optimal placement is not always feasible.
Due to their mobility and flexibility,
unmanned aerial vehicles (UAVs) provide new opportunities for FSO relaying systems.
In this paper, a hovering UAV-based serial FSO decode-and-forward relaying system is investigated.
In the channel modelling for such a system, four types of impairments (i.e., atmospheric loss, atmospheric turbulence, pointing error, and link interruption due to angle-of-arrival fluctuation) are considered.
Based on the proposed channel model,
a tractable expression for the probability density function of the total channel gain is obtained.
Closed-form expressions of the link outage probability and end-to-end outage probability are derived.
Asymptotic outage performance bounds for each link and the overall system are also presented to reveal insights into the impacts of different impairments.
To improve system performance, we optimize the beam width, field-of-view and UAVs' locations.
Numerical results show that the derived theoretical expressions are accurate to evaluate the outage performance of the system.
Moreover, the proposed optimization schemes are efficient and can improve performance significantly.
\end{abstract}

\begin{keywords}
FSO communications,
outage probability,
parameter optimization,
relay,
UAV.
\end{keywords}

\IEEEpeerreviewmaketitle

\section{Introduction}
\label{section1}
Recently, free-space optical (FSO) communications has attracted considerable attention to overcome the spectrum congestion problem \cite{FSO1,a1}, because unlike the radio frequency (RF) wireless communications,
FSO communications are unlicensed, directional,
immune to electromagnetic interference, and not easily intercepted.
Nevertheless, it is widely acknowledged that the terrestrial FSO signal is impaired by three important factors including atmospheric loss, atmospheric turbulence, and pointing error, which are all distance-dependent \cite{FSO10}.
Moreover, the transceivers in FSO communications are limited by the strict requirement of line-of-sight (LoS) alignment \cite{new07}.
These disadvantages lead to the development of the relay-assisted FSO communications \cite{FSO11},
whereby relays are placed properly between the source and the destination to improve system performance and reliability.
However, affected by the obstacles (such as lakes, mountains and buildings), the optimal positions to deploy relays may not be always feasible, and thus novel relaying schemes should be exploited.

To revolutionize the commonly-employed relaying network architectures, the unmanned aerial vehicle (UAV) based relaying scheme was proposed for FSO communications \cite{FSO2}.
Compared with the conventional terrestrial relays,
UAV-based relays have the intrinsic advantages of finding better communication environment and establishing the LoS links by adjusting positions dynamically.
With the perfect combination of FSO communications and UAVs,
the UAV-based FSO communication system is currently gaining significant attention,
which is regarded as a promising technology in many fields,
such as emergency response and military operation \cite{new10,new00,dd}.
It is expected that employing UAVs in FSO communication systems will inspire promising and innovative applications for future communication systems.

For UAV-based FSO relaying systems,
accurate channel modeling, tractable performance indicator expressions and efficient parameters optimization methods are of vital importance.
Several analytical models for hovering UAV based FSO communications were proposed \cite{performance1}.
In \cite{multielement}, the alignment and stability analyses for inter-UAV communications were presented.
In \cite{ESRPaper}, the ergodic sum rate for a UAV-based relay network with mixed RF/FSO channel was derived, and the joint effects of the atmospheric loss, the atmospheric turbulence, and the geometric and misalignment loss on the FSO channel were considered.
In \cite{performance2}, the throughput for a UAV-based mixed FSO/RF system with a buffer constraint was analyzed.
Note that the effects of the angle-of-arrival (AoA) fluctuations of due to orientation deviations of hovering UAVs were not considered in \cite{performance1}-\cite{performance2}.
By considering the non-orthogonality of the laser beam and the random fluctuations of the UAV's orientation and position, a novel channel model was proposed for UAV-based FSO communications \cite{statistical}.
In \cite{FSO3}, the FSO channel impaired by weak turbulence, pointing error and AoA fluctuation was established.
By considering the AoA fluctuation and pointing error,
the outage performance of an FSO communication system operating on high-altitude airborne platforms was investigated \cite{FSO4}.
Recently, by jointly considering the effects of atmospheric attenuation, atmospheric turbulence, pointing error and link interruption due to AoA fluctuation,
a novel statistical channel model for hovering UAV-based FSO communications was derived \cite{FSO5}.
In \cite{NonezeroBoresight}, by taking into account the effect of nonzero boresight pointing error, an extended model is established for UAV-based FSO communications.
However, due to the large number of complex functions and operators, the derived expressions \cite{FSO5,NonezeroBoresight} are cumbersome, and intuitive insights cannot easily be obtained for system design.
To facilitate the parameter optimization, we are motivated to develop tractable channel models and theoretical expressions.

For parameter optimizations, there exist two fundamental tradeoffs in UAV-based FSO communications \cite{FSO4,new11,FSO3,FSO8}.
First, the beam width balances the fading of pointing error and the strength of received signal.
That is, increasing beam width will mitigate the impact of pointing error,
but it also reduces the strength of the received signal.
Second, the selection of  field-of-view (FoV) involves a tradeoff between AoA fluctuation and received noise.
In other words, increasing the value of FoV can reduce the impact of AoA fluctuation but increase the received noise as well.
For network optimizations, the deployment or trajectory of UAVs can be designed to take full advantage of their ultra-flexibilities.
Such topics have been studied in open literature.
In \cite{new04,new12,new06}, the performance indicators like system throughput, network coverage, energy efficiency, and task completion time were optimized to find the optimal static locations or trajectory of UAVs.
When designing the UAVs' trajectory, most prior solutions rely on simplified channel models that are based on either the assumption that LoS link always exists or the LoS statistic model.
However, in reality, the LoS link may be obstructed by some obstacles (such as mountains and buildings). In other words, the LoS link may not always be available in practical systems. Therefore, the existence of the LoS link should be considered in the parameter optimization process \cite{FSO6}. Consequently, an important question arises:
How to efficiently find LoS link and obtain satisfied system performance for UAV-based FSO communication systems with obstacles?
Under the obstacle scenario, the UAV's location was optimized within a feasible region bounded by the minimum accepted elevation angles of source and destination \cite{addpaper1}. Note that the AoA fluctuation in \cite{addpaper1} is ignored, the considered lognormal distributed turbulence is only suitable for weak turbulence, and the feasible region limits the deployment of UAVs. Therefore, the parameter optimization problem in UAV-based FSO communications should be further investigated.

Motivated by the aforementioned discussions,
this paper focuses on a hovering UAV-based serial relaying FSO communication system,
and further investigates the channel modeling, performance analysis and parameter optimization problems.
The contributions of this paper are summarized as follows:
\begin{itemize}
  \item
  \emph{Channel modeling}: The considered FSO system involves three kinds of links, i.e., the ground-to-UAV (GU) link, the UAV-to-UAV (UU) link, and the UAV-to-ground (UG) link. For each link, the effects of atmospheric loss, atmospheric turbulence, pointing error and link interruption due to AoA fluctuation are jointly considered. Different from \cite{FSO5}, we consider the independence of four impairments, and obtain a simple and tractable expression for the probability density function (PDF) of the total channel gain, which can be effectively used for performance analysis. The accuracy of the derived PDF is verified in Section \ref{section5}.
  \item
  \emph{Performance analysis}: Based on the derived PDF of the total channel gain, the closed-form expressions of the link outage probability and the end-to-end outage probability are derived. Through asymptotic analysis, the asymptotic bounds of the link outage probability and the end-to-end outage probability are then obtained. Furthermore, insights about the derived expressions are also provided. Numerical results verify the accuracy of the derived theoretical expressions,  and these expressions can be directly utilized to evaluate system performance rapidly without time-intensive simulations.
  \item
 \emph{Parameter optimization}: To further improve the system performance, the beam width, the FoV and the UAVs' locations are optimized. Specifically, the minimum beam width is derived first, and the actual beam width of each link should be adjusted to be larger than or equal to the minimum value.
 Then, the FoV is optimized by minimizing the end-to-end outage probability with the beam width constraint, and the asymptotically optimal FoV is derived by solving a nonlinear equation.
 Finally, under the obstacle scenario, the locations of UAVs are optimized by solving a minimization problem. Then, the problem is transformed to a min-max problem, which can be effectively solved by using the embedded function ``fmincon" in MATLAB \cite{FSO6}. Numerical results verify the effectiveness of the proposed parameter optimization schemes.
\end{itemize}

The rest of the paper is organized as follows.
Section \ref{section2} presents the system model.
In Section \ref{section3}, the exact outage probabilities and the corresponding asymptotic bounds for the UAV-based FSO relaying system are analyzed.
In Section \ref{section4}, the parameter optimization is further investigated.
Numerical results are presented in Section \ref{section5}.
Finally, Section \ref{section6} concludes the paper.

\emph{Notations:} Throughout this paper, regular font indicates a scalar.
$X\sim {\cal N} (\mu, \sigma^2)$ denotes that $X$ follows a Gaussian distribution having mean $\mu$ and variance $\sigma^2$;
$\Gamma(\cdot)$ denotes the Gamma function;
$K_n(\cdot)$ denotes the modified Bessel function of the second kind with order $n$ \cite{FSO9};
${\rm{erf}}(x) = \frac{2}{{\sqrt \pi }}\int_0^x {{e^{ - {t^2}}}\rm{d}}t $ denotes the error function;
$\delta ( \cdot )$ denotes the Dirac delta function;
$G_{p,q}^{m,n}[ \cdot ]$ is the Meijer's G-function \cite{FSO9};
and $\Pr(\cdot)$ denotes the probability of an event.

\section{System Model}
\label{section2}
As shown in Fig. \ref{fig1}, we consider a UAV-based multi-hop relaying FSO communication system,
which includes one source node, one destination node, and $N$ UAV-based serial relay nodes.
In the system, the transmit optical signal from the source node propagates through $N$ serial relay nodes before detection at the destination node.
Specifically, the source node is equipped with a laser diode (LD) and transmits the optical signal to the first relay node.
For each relay node, a photodiode (PD) is first employed to convert the received optical signal to electrical signal, the decode-and-forward (DF) relaying scheme is then implemented, and the recovered signal is finally converted to optical signal by using an LD and transmitted to the next node. At the detection node, a PD is utilized to perform the photoelectric conversion.
Different from the ground fixed relays, the relay nodes are assumed to be mounted on the UAVs.
Moreover, we assume that the UAVs hover at their own fixed positions,
and the orientations of the transceiver in each link are aligned with each other.
However, the instantaneous positions of the relay nodes and the orientations of the transceiver deviate from the mean values due to numerous random events related to UAV hovering or building swaying.

\begin{figure}
\centering
\includegraphics[width=8cm]{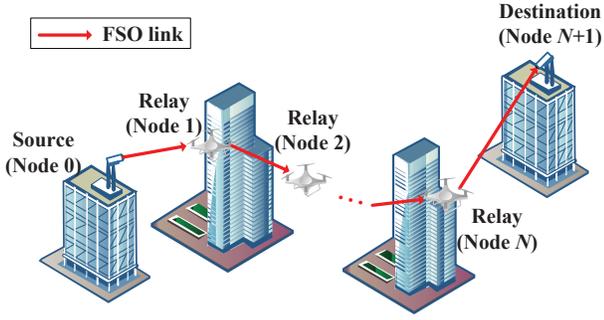}
\caption{A UAV-assisted serial relaying FSO communication system}
\label{fig1}
\end{figure}

\subsection{Received Signal Model}
\label{section2_1}
Here, the intensity modulation and direct detection is employed and the on-off keying (OOK) is used as the modulation scheme.
For simplification, the source node is denoted by node 0,
the relay nodes are denoted by nodes $1,2,\cdots,N$,
and the destination node is denoted by node $N+1$.
In the considered system, there are $N+1$ links, and the $i{\rm{th}}$ link is built by node $i-1$ and node $i$.
Mathematically, the received electrical signal at node $i$ in the $i{\rm{th}}$ link is given by
\begin{equation}
{y_i} = R{h_i}{x_i} + {n_i}, \; i= 1,\cdots, N+1,
\label{equ1}
\end{equation}
where $R$ is the optoelectronic conversion factor of the PD at node $i$;
$x_i$ is the transmitted optical signal of node $i-1$;
$h_i$ is the channel gain of the $i{\rm th}$ link,
and $n_i$ is the signal-independent additive white Gaussian noise having mean zero and variance $\sigma _{{\rm{n}},i}^2$, i.e., $n_i \sim {\cal N} (0, \sigma _{{\rm{n}},i}^2)$.
The transmitted optical signal is taken as symbols drawn equal-probably from the OOK constellation such that ${x_i} \in \{ 0,2{P_{\rm{t}}}\} $,
and $P_{\rm t}$ is the average transmitted optical power per link, which is related to the total transmit optical power $P$ by $P_{\rm{t}}=P/(N+1)$.
At each receiver, the background noise is assumed to be the dominant noise source, and the noise variance is a quadratic function with respect to the FoV of the receiver \cite{FSO5}, i.e.,
\begin{equation}
\sigma _{{\rm{n}}{\rm{,}}i}^{\rm{2}}= \Lambda \theta _{{\rm{FoV}}{\rm{,}}i}^{\rm{2}},
\label{plusFOV2}
\end{equation}
where $\Lambda$ is a coefficient related to the wavelength, bandwidth of the optical filter, spectral radiance and lens area.
$\theta _{{\rm{FoV}}{\rm{,}}i}$ is the receiver's FoV at node $i$.

\subsection{Channel Model}
\label{section2_2}
There are three types of links in the considered system, i.e.,
the GU link, the UU links, and the UG link.
Actually, the first link is the GU link,
the $(N+ 1){\rm th}$ link is the UG link,
and all other links (i.e., from the second link to the $N{\rm th}$ link) belong to the UU links.
For all the links, the total channel gain $h_i$  can be formulated as a product of four impairments, i.e.,
\begin{equation}
{h_i} = h_i^{{\rm{(l)}}}h_i^{({\rm{a}})}h_i^{({\rm{pe}})}h_i^{({\rm{aoa}})},\; i=1,\cdots, N+1,
\label{equ2}
\end{equation}
where $h_i^{{\rm{(l)}}}$ is the atmospheric loss, $h_i^{({\rm{a}})}$ is the atmospheric turbulence, $h_i^{({\rm{pe}})}$ is the pointing error, and $h_i^{({\rm{aoa}})}$ is the link interruption due to AoA fluctuation.
Inspired by the ideas in \cite{FSO4,FSO5} and \cite{FSO6}, three typical adjustable parameters (i.e., beam width, FoV, and link distance) are considered to be different in each link for the sake of the system parameter optimization,
and the other parameters are assumed to be the same in all $N+1$ links unless otherwise stated.
In the following, the statistical characteristic of the four impairments will be analyzed, respectively.

\subsubsection{Atmospheric loss}
Referring to \cite{plus1}, the atmospheric loss $h_i^{({\rm{l)}}} $  is determined by the exponential Beers-Lambert law as
\begin{equation}
h_i^{({\rm{l)}}} = \exp ( - {Z_i}\Phi ), \; i=1, \cdots, {N+1},
\label{equ3}
\end{equation}
where $Z_i$ is the distance of the $i{\rm{th}}$ link,
and $\Phi$ is the atmospheric attenuation coefficient related to visibility.

\subsubsection{Atmospheric turbulence}
For the atmospheric turbulence,
the Gamma-Gamma fading model is used here since its distribution is in close agreement with measurements under various turbulence conditions.
Therefore, the PDF of $h_i^{({\rm{a}})}$ is given by \cite{b1}
\begin{eqnarray}
{f_{h_i^{(\rm{a})}}}\!\!\left( {h_i^{({\rm{a}})}} \right) \!\!\!\! &=&\!\!\!\! \frac{{2{{(\alpha_i \beta_i )}^{\frac{\alpha_i  + \beta_i }{2}}}}}{{\Gamma (\alpha_i )\Gamma (\beta_i)}}{(h_i^{({\rm{a)}}})^{\frac{{\alpha_i  + \beta_i }}{2} - 1}}\nonumber\\
&\times&\!\!\!\! {K_{\alpha_i  - \beta_i }}\!\!\left( {2\sqrt {\alpha_i \beta_i h_i^{({\rm{a}})}} } \right),\; i\!=\!1,\cdots\!, N\!+\!1,
\label{equ4}
\end{eqnarray}
where the parameters ${\alpha_i}$ and ${\beta _i}$ related to large-scale and small-scale eddies are given by
\begin{equation}
\left\{
\begin{array}{l}
{\alpha _i} = {\left[ {\exp \left( \frac{0.49\sigma _{{\rm{R}},i}^2}{{{\left( {1 + 1.11\sigma _{{\rm{R}},i}^{12/5}} \right)}^{7/6}}} \right) - 1} \right]^{ - 1}}\\
{\beta _i} = {\left[ {\exp \left( \frac{0.51\sigma _{{\rm{R}},i}^2}{{{\left( {1 + 0.69\sigma _{{\rm{R}},i}^{12/5}} \right)}^{5/6}}} \right) - 1} \right]^{ - 1}}
\end{array}\right.,
\label{equ5}
\end{equation}
where $\sigma _{{\rm{R}},i}^2{\rm{ = }}1.23C_n^2{k^{7/6}}Z_i^{11/6}$ is the Rytov variance related to the link distance $Z_i$, $C_n^2$ is the index of refraction structure parameter of atmosphere, and $k = 2\pi /\lambda$ is the optical wave number with $\lambda$ being the wavelength.

\subsubsection{Pointing error}
Fig. \ref{fig2} shows the Gaussian beam footprint at receiver aperture, where the beam is orthogonal to the receiver lens plane. We consider a circular detection aperture of radius $r_a$ at node $i$, $i=1,\cdots,N+1$.
The radial displacement vector from the receiving aperture center to the beam center is expressed as ${{\bf{r}}_{{\rm{d}},i}} = [{x_{{\rm{d}},i}},{y_{{\rm{d}},i}}]$, which results from three different vectors:
i) the displacement vector induced by the transmitter's orientation deviation ${{\bf{r}}_{{\theta _{\rm{t}}},i}} = [{x_{{\theta _{\rm{t}}},i}},{y_{{\theta _{\rm{t}}},i}}]$,
ii) the displacement vector induced by the transmitter's position deviation ${{\bf{r}}_{{\rm{t}},i}} = [{x_{{\rm{t}},i}},{y_{{\rm{t}},i}}]$,
and iii) the displacement vector induced by the receiver's position deviation ${{\bf{r}}_{{\rm{r}},i}} = [{x_{{\rm{r}},i}},{y_{{\rm{r,}}i}}]$.
Consequently, the displacements located along the horizontal axe and the vertical axe at the detector plane are, respectively, given by
\begin{equation}
\left\{
\begin{array}{l}
{x_{{\rm{d}},i}} = {x_{{\rm{t}},i}}{\rm{ + }}{x_{{\rm{r}},i}} + {x_{{\theta _{\rm{t}}},i}}\\
 {y_{{\rm{d}},i}} = {y_{{\rm{t}},i}} + {y_{{\rm{r}},i}} + {y_{{\theta _{\rm{t}}},i}}
\end{array}\right..
\label{equ6}
\end{equation}
According to the central limit theorem, the position and orientation deviations follow Gaussian distributions as they result from numerous random events \cite{FSO5}.

\begin{figure}
\centering
\includegraphics[width=8cm]{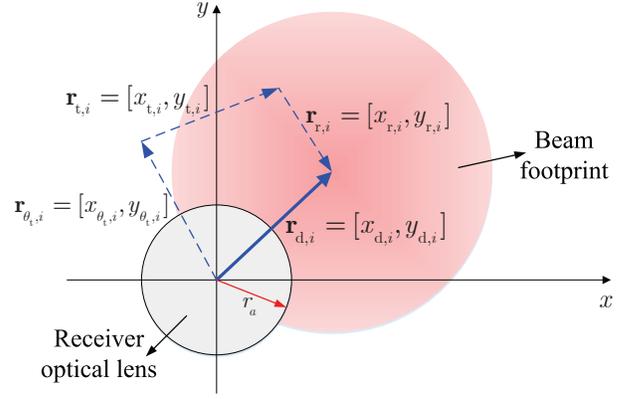}
\caption{The Gaussian beam footprint at receiver aperture for the $i$th link}
\label{fig2}
\end{figure}

\emph{a) UU links:} For these links, both ${x_{{\rm{t}},i}}$ and ${x_{{\rm{r}},i}}$ follow the same Gaussian distribution with mean zero and variance $\sigma _{{\rm{p,u}}}^{\rm{2}}$, i.e., ${x_{{\rm{t}},i}},{x_{{\rm{r}},i}} \sim {\cal N}(0, \sigma _{{\rm{p,u}}}^{\rm{2}})$.
Moreover, ${x_{{\theta _{\rm{t}}},i}} = {Z_i}\tan {\theta _{{\rm{tx}},i}} \simeq {Z_i}{\theta _{{\rm{tx}},i}}$,
where ${\theta _{{\rm{tx}},i}}\sim {\cal N}(0, \sigma _{{\rm{angle,u}}}^{\rm{2}})$. This indicates that ${x_{{\theta _{\rm{t}}},i}} \sim {\cal N} (0, Z_i^2 \sigma _{{\rm{angle,u}}}^{\rm{2}})$. Similarly, the distribution of ${y_{{\rm{d}},i}}$ can also be analyzed.
As a result, ${x_{{\rm{d}},i}}, {y_{{\rm{d}},i}} \sim {\cal N} (0, 2 \sigma_{\rm{p}, \rm{u}}^{2}+Z_{i}^{2} \sigma_{\rm{angle}, \rm{u}}^{2})$.

\emph{b) GU link: } For the first link, ${\theta _{{\rm{tx}},1}}$ is zero.
The variances of position deviation ${x_{\rm{t},1}}$ is assumed to be zero-mean Gaussian distribution with variance $\sigma _{{\rm{p,g}}}^{\rm{2}}$, i.e., ${x_{\rm{t},1}} \sim {\cal N} (0,\sigma _{{\rm{p,g}}}^{\rm{2}})$. Moreover, we can obtain ${x_{{\rm{r}},1}} \sim {\cal N} (0, \sigma _{{\rm{p,u}}}^{\rm{2}})$. Therefore, ${x_{{\rm{d}},1}}, {y_{{\rm{d}},1}} \sim {\cal N} (0,  \sigma_{\rm{p}, \rm{u}}^{2}+  \sigma_{\rm{p}, \rm{g}}^{2})$.

\emph{c) UG link: } Referring to the analysis for GU and UU links, we can obtain ${x_{{\rm t},N+1}} \sim {\cal N} (0,\sigma _{{\rm{p,g}}}^{\rm{2}})$, ${x_{{\rm{r}},N+1}} \sim {\cal N}(0, \sigma _{{\rm{p,u}}}^{\rm{2}})$, and ${x_{{\theta _{\rm{t}}},N+1}} \sim {\cal N} (0, Z_{N+1}^2 \sigma _{{\rm{angle,u}}}^{\rm{2}})$ for the $(N+1){\rm th}$ link. Consequently, ${x_{{\rm{d}},N+1}}, {y_{{\rm{d}},N+1}} \sim {\cal N} (0,  \sigma_{\rm{p}, \rm{u}}^{2}+  \sigma_{\rm{p}, \rm{g}}^{2} +Z_{N+1}^{2} \sigma_{\rm{angle}, \rm{u}}^{2} )$.

Assuming that the variables in (\ref{equ6}) are independent of each other \cite{FSO5},
the total radial displacement can be expressed as ${r_{{\rm{tr}},i}}{\rm{ = }}\sqrt {x_{{\rm{d}},i}^2 + y_{{\rm{d}},i}^2} $, which follows a Rayleigh distribution, i.e.,
\begin{equation}
{f_{{r_{{\rm{tr}},i}}}}({r_{{\rm{tr}},i}}) =\frac{r_{{\rm{tr}},i}}{\sigma _{{\rm{s}},i}^{\rm{2}}}\exp \left( { - \frac{r_{{\rm{tr}},i}^2}{2\sigma _{{\rm{s}},i}^{\rm{2}}} }\right),\; {r_{{\rm{tr}},i}} \geq0,
\label{pl1}
 \end{equation}
where the total displacement variances $\sigma_{\mathrm{s}, i}^{2}$ for different links are derived as
\begin{equation}
\label{plus2}
\sigma_{\mathrm{s}, i}^{2}\!=\!\left\{\!\!\!\begin{array}{ll}{\sigma_{\mathrm{p}, \mathrm{u}}^{2}+\sigma_{\mathrm{p}, \mathrm{g}}^{2},} & {i=1} \\
{2 \sigma_{\mathrm{p}, \mathrm{u}}^{2}+Z_{i}^{2} \sigma_{\mathrm{angle}, \mathrm{u}}^{2},} & {i=2, \ldots, N} \\ {\sigma_{\mathrm{p}, \mathrm{u}}^{2}+\sigma_{\mathrm{p}, \mathrm{g}}^{2}+Z_{N+1}^{2} \sigma_{\mathrm{angle}, \mathrm{u}}^{2},} & {i=N+1}\end{array}\right..
\end{equation}

Based on the radial displacement model in (\ref{pl1}), the PDF of $h_i^{({\rm{pe}})}$ is written as \cite{FSO8}
\begin{eqnarray}
{f_{h_i^{({\rm{pe}})}}}\left( {h_i^{({\rm{pe}})}} \right) \!\!\!\!&=&\!\!\!\! \frac{{\zeta _i^2}}{{A_i^{\zeta _i^2}}}{(h_i^{({\rm{pe}})})^{\zeta _i^2 - 1}},\nonumber\\
&&0 \le h_i^{({\rm{pe}})} \le {A_i},\; i\!=\!1,\cdots,N\!+\!1,
\label{equ8}
\end{eqnarray}
where ${A_i} = [{{\rm erf}} ({v_i})]^2$ is the fraction of the collected power when ${r_{{\rm{tr}},i}} = 0$,
${v_i} =  {\sqrt \pi  {r_{{a}}}} /\left( {\sqrt 2 {w_{z,i}}} \right)$ is the ratio between the aperture radius and the beam width. $\zeta _i^2 = {w_z}_{_{{\rm{eq}}},i}^2/{\rm{4}}\sigma _{{\rm{s}},i}^{\rm{2}}$ is the ratio between the squared equivalent beam width and displacement variance, where $\omega _{{z_{\rm eq }},i}^2 = \omega _{z,i}^2 \sqrt \pi  {\rm erf} (v_i)/(2 v_i e^{-v^2}) \approx \omega _{z,i}^2 + 3/(2\sqrt 2) $  is the equivalent beam width \cite{FSO7}.

\subsubsection{Link interruption due to AoA fluctuation}
As depicted in Fig. \ref{fig3}, due to the effect of AoA fluctuation, the beam is no longer orthogonal to the receiver plane. When an incident laser with $\theta_{{\rm a},i}$ arrives at the receiving plane,
the airy pattern may be out of detector range occasionally due to the relatively large orientation deviations of hovering UAV,
and the case that received AoA exceeds the range of FoV will result in an outage.
Here, the AoAs for different links are defined as \cite{FSO5}
\begin{equation}
\label{AOA}
\theta_{\mathrm{a}, i} \! \! \simeq \!\!\left\{\! \! \!\!\! \begin{array}{ll}
\sqrt{\theta_{\mathrm{rx}, i}^{2}+\theta_{\mathrm{ry}, i}^{2}}, &\!\! i=1 \\
\sqrt{\left(\theta_{\mathrm{tx}, i}\!+\!\theta_{\mathrm{rx}, i}\right)^{2}\!\!+\!\!\left(\theta_{\mathrm{ty}, i}\!+\!\theta_{\mathrm{ry}, i}\right)^{2}}, &\!\! i=2, \ldots, N \\
\sqrt{\theta_{\mathrm{tx}, i}^{2}+\theta_{\mathrm{ty}, i}^{2}}, &\!\! i=N+1
\end{array}\right.\!\!\!,
\end{equation}
where $\theta_{\mathrm{tx}, i}$ and $\theta_{\mathrm{ty}, i}$ are the horizontal and vertical misalignment orientations of the transmitter in the $i$th link.
Similarly, $\theta_{\mathrm{rx}, i}$ and $\theta_{\mathrm{ry}, i}$ represent the horizontal and vertical orientation deviations of the receiver.
Therefore, ${\theta _{{\rm{a}},i}},\; i=1,\cdots, N+1$ is Rayleigh-distributed, i.e.,
\begin{equation}
 {f_{{\theta _{{\rm{a}},i}}}}( {{\theta _{{\rm{a}},i}}} ) \!=\!\frac {\theta _{{\rm{a}},i}}{m_i\sigma _{{\rm{angle,u}}}^{\rm{2}} }\exp \!\!\left(\! { - \frac{\theta _{{\rm{a}},i}^2}{2m_i\sigma _{{\rm{angle,u}}}^{\rm{2}}}}\! \right), \theta _{{\rm{a}},i} \!\geq\! 0,
 \label{plus3}
\end{equation}
where $m_i$ varies with the type of links and is given by
\begin{equation}
m_i=\left\{\begin{array}{ll}{1,} & {i=\operatorname{l \ or} N+1} \\ {2,} & {i=2, \ldots, N }\end{array}\right. .
\label{pa1}
\end{equation}

\begin{figure}
\centering
\includegraphics[width=8cm]{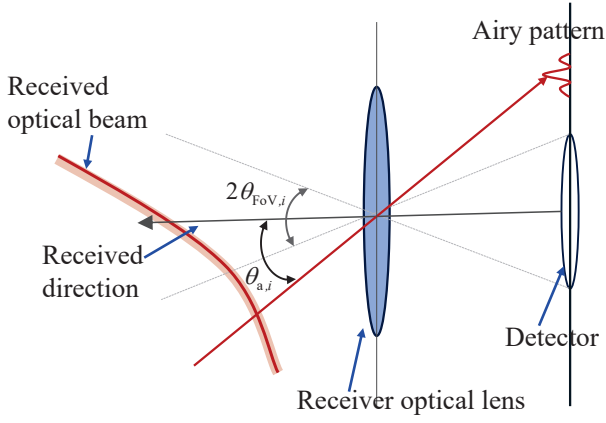}
\caption{A schematic diagram about the impact of AoA fluctuation for the $i$th link}
\label{fig3}
\end{figure}

We focus on the main lobe of the airy pattern, which has the most power.
The width of the main lobe is approximately equal to $2.4\lambda$, which is much smaller than the typical detector size \cite{d1}. Consequently, a reasonable consideration is the zero-one distribution that describes whether the incident laser is located on the receiving FoV or not \cite{add2}. Mathematically, given a fixed FoV, the link interruption (i.e., $h_i^{({\rm{aoa}})} = 0$) occurs if ${\theta _{{\rm a},i}} > {\theta _{{\rm{FoV}},i}}$, and the maximum signal power (i.e., $h_i^{({\rm{aoa}})} = 1$) is collected otherwise. Therefore, the corresponding PDF of $h_i^{({\rm{aoa}})}$ is given by
\begin{eqnarray}
{f_{h_i^{({\rm{aoa)}}}}}(h_i^{({\rm{aoa}})}) \!\!\!\!&=&\!\!\!\! \exp \left( { - \frac{{\theta _{{\rm{FoV}},i}^2}}{{2m_i\sigma _{{\rm{angle,u}}}^{\rm{2}}}}} \right)\delta (h_i^{({\rm{aoa}})}) \nonumber\\
 &+&\!\!\!\!\left[ {1 \!-\! \exp \left( { - \frac{{\theta _{{\rm{FoV}},i}^2}}{{2m_i\sigma _{{\rm{angle,u}}}^{\rm{2}}}}} \right)} \right]\delta (h_i^{{\rm{(aoa)}}} \!-\! 1),\nonumber\\
 &&\;\;\;\;\;\;\;\;\;\;\;\;\;\;\;\;\;\;\;\;\;\;\; i=1,\cdots, N+1.
\label{equ11}
\end{eqnarray}

\section{Outage Performance Analysis}
\label{section3}
In this section, the overall statistical characteristic of the FSO channel is analyzed.
Then, the link outage probability and the end-to-end outage probability are derived.
Finally, the asymptotic behaviors of the outage probabilities are studied.
Some insights are provided as well.

\subsection{Overall Channel Statistical Characteristic}
\label{section3_1}
For the four impairments in (\ref{equ2}), $h_i^{({\rm l})}$ and $h_i^{({\rm a})}$ are obviously independent and also independent of $h_i^{({\rm pe})}$ and $h_i^{({\rm aoa})}$. Although the orientation deviations of the transmitters (i.e., ${\theta _{{\rm{tx}},i}}$ and ${\theta _{{\rm{ty}},i}}$) contribute to the pointing error and the AoA fluctuation simultaneously, the correlation between $h_i^{({\rm pe})}$ and $h_i^{({\rm aoa})}$ is weak when ${\theta _{{\rm{FoV}},i}} \gg {\sigma _{{\rm{angle,u}}}}$.
The reason is provided as follows.
Note that ${\sigma _{{\rm{angle,u}}}}$ is the standard deviation of ${\theta _{{\rm{tx}},i}}$ and ${\theta _{{\rm{ty}},i}}$.
When ${\theta _{{\rm{FoV}},i}} \gg {\sigma _{{\rm{angle,u}}}}$,
compared with ${{\bf{r}}_{{\rm{t}},i}}$ and ${{\bf{r}}_{{\rm{r}},i}}$,
the impact of ${{\bf{r}}_{{\theta _{{\rm{t}},}}i}} \simeq [{Z_i}{\theta _{{\rm{tx}},i}},{Z_i}{\theta _{{\rm{ty}},i}}]$ in Fig. \ref{fig2} on pointing error is weak.
This indicates that the transmitter's orientation misalignment is weakly correlated with pointing error when ${\theta _{{\rm{FoV}},i}} \gg {\sigma _{{\rm{angle,u}}}}$.
Moreover, when ${\theta _{{\rm{FoV}},i}} \gg {\sigma _{{\rm{angle,u}}}}$, ${\theta _{{\rm{tx}},i}}$ and ${\theta _{{\rm{ty}},i}}$ have little effect on ${\theta _{{\rm{a}},i}}$, and thus the transmitter's orientation misalignment and the AoA fluctuation are also weakly correlated.
Consequently, the four impairments in (\ref{equ2}) can be approximated as independent variables.
Specially, for the GU link, the pointing error and the AoA fluctuation are no longer relevant since orientation deviations of the source node are approximately zero, and these four impairments are practically independent.

According to the above analysis, the PDF of ${h_i}$ is approximated as
\begin{equation}
f_{h_{i}}\left(h_{i}\right) \simeq \int_{0}^{+\infty} \frac{1}{h_{i}^{\prime}} f_{h_{i}^{(\rm{aoa})}}\!\!\!\left( \frac{h_{i}}{h_{i}^{\prime}}\right) f_{h_{i}^{\prime}}\left(h_{i}^{\prime}\right) \mathrm{d} h_{i}^{\prime},
\label{equ12}
\end{equation}
where $h_i^{'} \triangleq {h_i^{({\rm{l}})}h_i^{({\rm{a}})}h_i^{({\rm{pe}})}}$.
By solving (\ref{equ12}), we derive the PDF of $h_i$ and state the result in the following theorem.
\begin{theorem}
For the UAV-based FSO communication system, the PDF of the overall channel gain $h_i$ is approximated as
\begin{eqnarray}
{f_{{h_i}}}({h_i}) &\simeq&  \exp\left( { - \frac{{\theta _{{\rm{FoV}},i}^2}}{{2m_i\sigma _{{\rm{angle,u}}}^{\rm{2}}}}} \right)\delta ({h_i}) \nonumber\\
&+& \left[\! {1 \!- \!\exp\! \left(\! \!{ - \frac{{\theta _{{\rm{FoV}},i}^2}}{{2m_i\sigma _{{\rm{angle,u}}}^{\rm{2}}}}}\! \right)}\!\! \right]\frac{{{\alpha _i}{\beta _i}\zeta _i^2}}{{{A_i}h_i^{({\rm{l}})}\Gamma ({\alpha _i})\Gamma ({\beta _i})}} \nonumber \\
&\times& G_{1,3}^{3,0}\!\!\left[\! {\frac{{{\alpha _i}{\beta _i}}}{{{A_i}h_i^{{\rm{(l}})}}}{h_i} \left|\!\! {\begin{array}{*{20}{c}}
{\zeta _i^2}\\
{\zeta _i^2 - 1,{\alpha _i} - 1,{\beta _i} - 1}
\end{array}} \right.}\!\!\!\! \right]\!\!.
\label{equ13}
\end{eqnarray}
\label{the0}
\end{theorem}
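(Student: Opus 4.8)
The plan is to exploit the fact that the AoA factor $h_i^{(\mathrm{aoa})}$ is two-valued, taking only $0$ or $1$, so the product $h_i = h_i' h_i^{(\mathrm{aoa})}$ with $h_i' \triangleq h_i^{(\mathrm{l})}h_i^{(\mathrm{a})}h_i^{(\mathrm{pe})}$ splits into two mutually exclusive cases. This reduces the theorem to a clean two-term structure plus the separate problem of computing the PDF of the continuous factor $h_i'$. Concretely, I would substitute the two-point PDF (\ref{equ11}) into the product formula (\ref{equ12}). Writing the outage weight as $p_0 = \exp\!\big(-\theta_{\mathrm{FoV},i}^2/(2m_i\sigma_{\mathrm{angle,u}}^2)\big)$, the Dirac mass at the origin contributes $p_0\,\delta(h_i)$ after applying the scaling identity $\delta(h_i/h_i') = h_i'\,\delta(h_i)$ and using $\int_0^\infty f_{h_i'}(h_i')\,\mathrm{d}h_i' = 1$, while the mass at unity sifts the integrand at $h_i' = h_i$ and yields $(1-p_0)\,f_{h_i'}(h_i)$. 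This already reproduces the first term of (\ref{equ13}) and the prefactor $1-p_0$ of its second term, so the remaining work is purely to identify $f_{h_i'}$.

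Because the atmospheric loss (\ref{equ3}) is a deterministic constant, I would first derive the PDF of $g \triangleq h_i^{(\mathrm{a})}h_i^{(\mathrm{pe})}$ and then rescale. Conditioning on the pointing-error factor, $f_g$ is a Mellin-type convolution of the power-law pointing-error PDF (\ref{equ8}) on $[0,A_i]$ against the Gamma-Gamma PDF (\ref{equ4}). The key maneuver is to write the modified Bessel function through the identity $K_\nu(2\sqrt{z}) = \tfrac12\,G_{0,2}^{2,0}\big[z \,\big|\, \nu/2,\, -\nu/2\big]$ (with empty upper parameter row), after which the remaining integral is a standard Mellin integral of a Meijer $G$-function against a monomial, evaluable in closed form to the $G_{1,3}^{3,0}$ kernel appearing in (\ref{equ13}). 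Finally, since $h_i' = h_i^{(\mathrm{l})}g$ with $h_i^{(\mathrm{l})}>0$ fixed, the change of variables $f_{h_i'}(t) = (1/h_i^{(\mathrm{l})})\,f_g(t/h_i^{(\mathrm{l})})$ inserts the factor $1/h_i^{(\mathrm{l})}$ and divides the $G$-function argument by $h_i^{(\mathrm{l})}$, producing exactly the second term of (\ref{equ13}) after identifying $t$ with $h_i$ in the non-outage branch.

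I expect the Mellin-convolution evaluation of $f_g$ to be the main obstacle. One must select the correct $G$-function representation of the Bessel kernel, track the pole contributed by $\zeta_i^2$ (which enters the parameter list through the beam-width-to-displacement ratio $\zeta_i^2 = w_{z_{\mathrm{eq}},i}^2/4\sigma_{\mathrm{s},i}^2$), and verify that the finite upper limit $A_i$ arising from the bounded pointing-error support is absorbed cleanly into the argument $\alpha_i\beta_i h_i/(A_i h_i^{(\mathrm{l})})$ rather than truncating the integral. By comparison, the two-valued decomposition in the first step and the deterministic rescaling in the last step are routine.
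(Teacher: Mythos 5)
Your proposal is correct and follows essentially the same route as the paper's proof: substitute the two-point AoA PDF \eqref{equ11} into the product formula \eqref{equ12}, use the delta-scaling and sifting identities to obtain the $p_0\,\delta(h_i)$ term and the $(1-p_0)f_{h_i'}(h_i)$ term, and identify $f_{h_i'}$ as the standard Gamma-Gamma-plus-pointing-error composite PDF obtained by writing $K_{\alpha_i-\beta_i}$ as a Meijer $G$-function and evaluating the Mellin-type convolution (the paper simply cites this composite PDF from the literature rather than rederiving it).
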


\begin{proof}
See Appendix \ref{appa}.
\end{proof}

To verify the accuracy of the approximated PDF expression in \emph{Theorem \ref{the0}}, we provide Fig. \ref{fig4} in Section \ref{section5}. As can be seen, the approximate PDF in (\ref{equ13}) is tight as long as ${\theta _{{\rm{FoV}},i}} \ge 5{\sigma _{{\rm{angle,u}}}}$, which justifies the proposed approximation.

\subsection{Link Outage Probability}
\subsubsection{Exact expression}
The outage probability is an important metric to evaluate the performance of the FSO communication system.
The outage probability of the ${i{{\rm{th}}}}$ link ${p_i}$ is expressed as the probability that the instantaneous SNR ${\Upsilon _i}$ falls below the specified threshold ${\Upsilon _{{\rm{th}}}}$, i.e.,
\begin{equation}
{p_i} = \Pr ({\Upsilon _i} < {\Upsilon _{{\rm{th}}}})=\Pr({h_i} < {h_{{\rm{th}},i}}),
\label{equ16}
\end{equation}
where the corresponding  threshold of the channel gain is
\begin{equation}
{h_{{\rm{th,}}i}} = \sqrt {\frac{{{\Upsilon _{{\rm{th}}}}\sigma _{{\rm{n,}}i}^{\rm{2}}}}{2R^2{P_{\rm{t}}^2}}}=\frac{\theta _{{\rm{FoV}}{\rm{,}}i}}{RP_{\rm t}}\sqrt {\frac{{{\Upsilon _{{\rm{th}}}}\Lambda }}{2}}.
\label{equ17}
\end{equation}
The exact expression of (\ref{equ16}) is provided in the following theorem.

\begin{theorem}
For the UAV-based FSO communication system, the exact expression of the link outage probability  $p_i$, $i = 1,...,{N}{\rm{ + }}1$ is given by
\begin{eqnarray}
{p_i} \!\!\!\!\!&=& \!\!\!\!\! \exp \left(\!\! { - \frac{{\theta _{{\rm{FoV}},i}^2}}{{2m_i\sigma _{{\rm{angle,u}}}^{\rm{2}}}}}\!\! \right)\!+\! \left[ {1\! -\! \exp \left(\!\! { - \frac{{\theta _{{\rm{FoV}},i}^2}}{{2m_i\sigma _{{\rm{angle,u}}}^{\rm{2}}}}} \!\!\right)} \right]\nonumber\\
&\times& \!\!\!\!\!\! \frac{{\zeta _i^2}}{{\Gamma ({\alpha _i})\Gamma ({\beta _i})}} G_{2,4}^{3,1}\!\!\left[ {\frac{{{\alpha _i}{\beta _i}}\theta _{{\rm{FoV}},i}}{{{A_i}h_i^{({\rm{l)}}}}RP_{\rm t}}\!\sqrt {\!\frac{{{\Upsilon _{{\rm{th}}}}\Lambda }}{2}}\!\!\left|\!\! {\begin{array}{*{20}{c}}
{1,\zeta _i^2 \!+\! 1}\\
{\zeta _i^2,{\alpha _i},{\beta _i},0}
\end{array}} \right.}\!\!\!\!\! \right]\!\!,
\label{equ18}
\end{eqnarray}
where $m_i$ is given by (\ref{pa1}).
\label{the1}
\end{theorem}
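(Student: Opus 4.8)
The plan is to obtain $p_i$ directly as the cumulative distribution function of $h_i$ evaluated at the threshold $h_{\mathrm{th},i}$ of (\ref{equ17}), i.e.\ $p_i=\Pr(h_i<h_{\mathrm{th},i})=\int_0^{h_{\mathrm{th},i}}f_{h_i}(h_i)\,\mathrm{d}h_i$, and to substitute the approximate PDF supplied by \emph{Theorem \ref{the0}}. Since that PDF is a weighted sum of a Dirac mass at the origin and a continuous Meijer $G$-function term, the integral separates cleanly into two pieces that I would treat independently.

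The discrete piece is immediate: integrating $\delta(h_i)$ over $[0,h_{\mathrm{th},i}]$ collects its full unit mass, so this contribution equals exactly $\exp\!\big(-\theta_{\mathrm{FoV},i}^2/(2m_i\sigma_{\mathrm{angle,u}}^2)\big)$, which is precisely the first term of (\ref{equ18}). What remains is the continuous piece, whose overall weight $1-\exp\!\big(-\theta_{\mathrm{FoV},i}^2/(2m_i\sigma_{\mathrm{angle,u}}^2)\big)$ is constant in $h_i$ and simply multiplies the result; I therefore concentrate on the $G$-function integral. Writing $c_i\triangleq\alpha_i\beta_i/(A_i h_i^{(\mathrm{l})})$ for the common argument coefficient, the continuous density equals $c_i\,\zeta_i^2/(\Gamma(\alpha_i)\Gamma(\beta_i))$ times $G_{1,3}^{3,0}[c_i h_i\,|\,\cdots]$, and because the integrand carries no explicit power of $h_i$, its primitive follows from the standard Meijer $G$ integration formula with no power prefactor.

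Concretely, I would apply (cf.\ \cite{FSO9})
\begin{equation}
\int_0^{x}\! G_{p,q}^{m,n}\!\!\left[c t \left|{\begin{array}{c}(a_p)\\(b_q)\end{array}}\right.\right]\!\mathrm{d}t = x\, G_{p+1,q+1}^{m,n+1}\!\!\left[c x \left|{\begin{array}{c}0,\,(a_p)\\(b_q),\,-1\end{array}}\right.\right]\!,
\end{equation}
which converts the definite integral into a single $G_{2,4}^{3,1}$ of argument $c_i h_{\mathrm{th},i}$, prepending the upper parameter $0$ and appending the lower parameter $-1$. The continuous contribution is then $c_i h_{\mathrm{th},i}\cdot\zeta_i^2/(\Gamma(\alpha_i)\Gamma(\beta_i))$ times this $G$-function. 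The final step is cosmetic: because the leading factor $c_i h_{\mathrm{th},i}$ coincides with the $G$-function's argument, I absorb it through the scaling identity $z^{\rho}G_{p,q}^{m,n}[z\,|\,(a_p);(b_q)]=G_{p,q}^{m,n}[z\,|\,(a_p)+\rho;(b_q)+\rho]$ with $\rho=1$. This raises each parameter by one, sending $\{0,\zeta_i^2\}\mapsto\{1,\zeta_i^2+1\}$ and $\{\zeta_i^2-1,\alpha_i-1,\beta_i-1,-1\}\mapsto\{\zeta_i^2,\alpha_i,\beta_i,0\}$, and reduces the coefficient to $\zeta_i^2/(\Gamma(\alpha_i)\Gamma(\beta_i))$. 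Restoring the carried weight $1-\exp(\cdots)$ and expanding $c_i h_{\mathrm{th},i}$ via (\ref{equ17}) then matches (\ref{equ18}) term for term.

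The main obstacle is bookkeeping rather than any analytical difficulty: the two Meijer $G$ manipulations each reshuffle the parameter lists, and it is easy to misplace the prepended $0$ or appended $-1$, or to apply the scaling shift with the wrong sign of $\rho$. I would therefore verify each parameter list explicitly and confirm that the argument collapses to $\alpha_i\beta_i\theta_{\mathrm{FoV},i}/(A_i h_i^{(\mathrm{l})}RP_{\mathrm{t}})\sqrt{\Upsilon_{\mathrm{th}}\Lambda/2}$. A minor subtlety is that the Dirac mass lies exactly at the lower limit $0$; I adopt the usual convention of collecting its full weight, which is the physically correct choice since an outage is certain whenever the AoA exceeds the FoV.
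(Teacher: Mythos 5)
Your proposal is correct and follows essentially the same route as the paper's Appendix B: split the CDF integral into the Dirac mass (giving the $\exp(\cdot)$ term) plus the Meijer $G$ integral, evaluate the latter with the standard primitive that yields a $G_{2,4}^{3,1}$ with prepended $0$ and appended $-1$, and then absorb the argument prefactor via the parameter-shift identity (G\&R 9.31.5) to arrive at the parameter lists $\{1,\zeta_i^2+1\}$ and $\{\zeta_i^2,\alpha_i,\beta_i,0\}$. The only difference is cosmetic bookkeeping of the constant $\alpha_i\beta_i/(A_i h_i^{(\mathrm{l})})$.
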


\begin{proof}
See Appendix \ref{appb}.
\end{proof}

\subsubsection{Asymptotic bound}
For large transmit power ${P_{\rm{t}}}$, the asymptotic bound of the link outage probability is given in the following theorem.

\begin{theorem}
For the UAV-based FSO communication system, the asymptotic bound of the link outage probability ${p_{{\rm{bound}},i}}$, $i = 1,...,{N}{\rm{ + }}1$ for large transmit power ${P_{\rm{t}}}$ is given by
\begin{equation}
{p_{{\rm{bound}},i}} = \exp \left( { - \frac{{\theta _{{\rm{FoV}},i}^2}}{{2m_i\sigma _{{\rm{angle,u}}}^{\rm{2}}}}} \right).
\label{equ19}
\end{equation}
\label{the2}
\end{theorem}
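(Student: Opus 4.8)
The plan is to start from the exact link-outage expression in \emph{Theorem \ref{the1}} and show that, as $P_{\rm t}\to\infty$, every $P_{\rm t}$-dependent contribution collapses, leaving only the interruption term $\exp(-\theta_{{\rm FoV},i}^2/(2 m_i \sigma_{\rm angle,u}^2))$. The crucial observation is that $P_{\rm t}$ enters (\ref{equ18}) solely through the argument of the Meijer G-function, via the factor $\tfrac{1}{R P_{\rm t}}\sqrt{\Upsilon_{\rm th}\Lambda/2}$; equivalently, through the channel-gain threshold $h_{{\rm th},i}$ of (\ref{equ17}), which is proportional to $1/P_{\rm t}$. Hence the argument of $G_{2,4}^{3,1}[\,\cdot\,]$ tends to $0$ as $P_{\rm t}\to\infty$.

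First I would isolate the two summands of (\ref{equ18}). The first summand is independent of $P_{\rm t}$ and already equals $p_{{\rm bound},i}$, while the second is $[1-\exp(-\theta_{{\rm FoV},i}^2/(2 m_i \sigma_{\rm angle,u}^2))]$ multiplied by the G-function term, which is precisely the no-interruption probability times the conditional outage $\Pr(h_i' < h_{{\rm th},i})$. Since the second summand is nonnegative, $p_i \ge p_{{\rm bound},i}$ for every $P_{\rm t}$, so $p_{{\rm bound},i}$ is a genuine lower bound (an irreducible outage floor induced by AoA link interruption); it then remains only to show the bound is attained in the limit.

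The key step is the small-argument asymptotics of the Meijer G-function. As its argument $z\to 0^+$, the leading behaviour of $G_{2,4}^{3,1}[z \mid \ldots]$ is governed by the residues at the poles $s=\zeta_i^2,\alpha_i,\beta_i$ of the first $m=3$ Gamma factors in the Mellin--Barnes integrand, the dominant one giving a leading power $z^{\min(\zeta_i^2,\alpha_i,\beta_i)}$. Because the Gamma--Gamma shape parameters satisfy $\alpha_i,\beta_i>0$ and the pointing-error exponent satisfies $\zeta_i^2>0$, this exponent is strictly positive, so the G-function term vanishes as $z\to 0$, and therefore the second summand of (\ref{equ18}) tends to $0$. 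Consequently $p_i\to p_{{\rm bound},i}$, which establishes (\ref{equ19}).

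I expect the main obstacle to be the rigorous justification of the vanishing G-function limit, since one must invoke the series/residue expansion and confirm that the dominant exponent is set by $\min(\zeta_i^2,\alpha_i,\beta_i)>0$, rather than by the bottom parameter $0$ (which sits among the denominator Gamma factors and hence contributes no $z^0$ constant term). A cleaner alternative, worth recording as a sanity check, is the purely probabilistic route: the second summand equals $\Pr(\theta_{{\rm a},i}\le\theta_{{\rm FoV},i})\,\Pr(h_i'<h_{{\rm th},i})$, and since $h_i'=h_i^{(\rm l)}h_i^{(\rm a)}h_i^{(\rm pe)}$ is a strictly positive continuous random variable with $\Pr(h_i'=0)=0$, the shrinking threshold $h_{{\rm th},i}\to 0$ forces $\Pr(h_i'<h_{{\rm th},i})\to 0$, yielding the same conclusion without invoking special-function asymptotics.
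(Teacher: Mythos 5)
Your proposal follows essentially the same route as the paper's Appendix~C: note that $h_{{\rm th},i}\propto 1/P_{\rm t}\to 0$, invoke the small-argument expansion of the Meijer G-function to show the second summand of (\ref{equ18}) decays like $h_{{\rm th},i}^{\kappa_i}$ with a strictly positive exponent $\kappa_i=\min(\zeta_i^2,\beta_i)$ (your $\min(\zeta_i^2,\alpha_i,\beta_i)$ coincides with this since $\alpha_i\ge\beta_i$ for the Gamma--Gamma model), and conclude that only the AoA interruption floor survives. The probabilistic shortcut you record at the end --- that the second summand is $\Pr(\theta_{{\rm a},i}\le\theta_{{\rm FoV},i})\Pr(h_i'<h_{{\rm th},i})$ with $h_i'$ continuous and positive, so it vanishes as the threshold shrinks --- is a valid and more elementary way to reach the same limit, though the paper does not use it.
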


\begin{proof}
See Appendix \ref{appc}.
\end{proof}

\begin{remark}
In \emph{Theorem \ref{the2}}, the asymptotic bound  (\ref{equ19}) varies with $\theta_{{\rm FoV},i}^2$ and $m_i$.
From (\ref{pa1}), it is known that $m_i=1$ for the UG/GU link, while $m_i=2$ for the UU links.
Given a fixed $\theta_{{\rm FoV},i}^2$ for all the links,  the asymptotic bound  (\ref{equ19}) for the UU links is larger than that for the GU/UG link, the asymptotic bound  (\ref{equ19}) for the GU link is the same as that for the UG link, which indicates that the UU links achieve the worst asymptotic outage performance, and the GU link and the UG link achieve the same asymptotic outage performance.
\label{rem1}
\end{remark}

\subsection{End-to-end Outage Probability}

\subsubsection{Exact expression}
For DF relaying, outage of each intermediate link may lead to the outage of the relaying system. Therefore, the end-to-end outage probability is given by
\begin{equation}
{P^{{\rm{out}}}} = 1 - \prod\limits_{i = 1}^{{N}{\rm{ + }}1} {\left( {1 - {p_i}} \right)}  .
\label{equ21}
\end{equation}
Then, the end-to-end outage probability is obtained in the following theorem.

\begin{theorem}
 For the UAV-based FSO communication system, the exact expression of the end-to-end outage probability ${P^{{\rm{out}}}}$ is expressed as
\begin{eqnarray}
\!\!\!\!\!\!\!{P^{{\rm{out}}}} \!\!\!\!\!\! &=& \!\!\!\!\! 1 \!\!-\!\! \left[ {1 \!-\! \exp \left( { - \frac{{\theta _{{\rm{FoV}},{\rm{1}}}^2}} {{{\rm{2}}\sigma _{{\rm{angle,u}}}^{\rm{2}}}}} \right)} \right]\nonumber \\
&  \times&\!\!\!\!\!\!\! \left[\! {1\! - \!\exp \!\left( \!\!{ - \frac{{\theta _{{\rm{FoV}},{N} + {\rm{1}}}^2}}{{{\rm{2}}\sigma _{{\rm{angle,u}}}^{\rm{2}}}}}\!\! \right)} \!\right]
 \prod\limits_{i = {\rm{2}}}^{{N}} {\left[ \!{{\rm{1}}\! -\! \exp\! \left(\! { - \frac{{\theta _{{\rm{FoV}},i}^2}}{{{\rm{4}}\sigma _{{\rm{angle,u}}}^{\rm{2}}}}}\! \right)}\! \right]} \nonumber \\
 &  \times&\!\!\!\!\!\!\!  \prod\limits_{i = 1}^{{N} \!+\! 1}\!\!\! {\left(\!\! {1\!\! -\! \frac{{\zeta _i^2}}{{\Gamma ({\alpha _i})\Gamma ({\beta _i})}}G_{2,4}^{3,1}\!\!\left[\! {\frac{{{\alpha _i}{\beta _i}}}{{{A_i}h_i^{({\rm{l)}}}}}{h_{{\rm{th,}}i}}\!\!\left|\!\! {\begin{array}{*{20}{c}}
{1,\zeta _i^2 \!+\! 1}\\
{\zeta _i^2,{\alpha _i},{\beta _i},0}
\end{array}} \right.} \!\!\!\!\!\right]} \!\right)}\! .
\label{equ22}
\end{eqnarray}
\label{the3}
\end{theorem}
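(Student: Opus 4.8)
The plan is to assemble the end-to-end expression directly from the defining product (\ref{equ21}) together with the single-link result already proved in \emph{Theorem \ref{the1}}. I would begin by writing $P^{\rm out}=1-\prod_{i=1}^{N+1}(1-p_i)$ and inserting the exact $p_i$ from (\ref{equ18}). The structural fact driving the whole derivation is that $p_i$ has the form $p_i=E_i+(1-E_i)G_i$, where $E_i=\exp\!\big(-\theta_{{\rm FoV},i}^2/(2m_i\sigma_{\rm angle,u}^2)\big)$ is the link-interruption (AoA) term and $G_i=\frac{\zeta_i^2}{\Gamma(\alpha_i)\Gamma(\beta_i)}G_{2,4}^{3,1}[\,\cdot\,]$ collects the atmospheric-loss, turbulence, and pointing-error contributions. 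This additive form is inherited from the zero-one AoA model (\ref{equ11}): with probability $E_i$ the AoA event alone causes an outage, and with probability $1-E_i$ the link survives that event but may still outage through the remaining impairments with probability $G_i$.

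The key algebraic step is the factorization
\begin{equation}
1-p_i=\big(1-E_i\big)\big(1-G_i\big),
\label{eqfactor}
\end{equation}
obtained at once by expanding $1-E_i-(1-E_i)G_i$. Once (\ref{eqfactor}) is available, the product over $i$ separates into a product of AoA-survival factors $\prod_i(1-E_i)$ and a product of conditional-survival factors $\prod_i(1-G_i)$, and matching these two products against (\ref{equ22}) becomes routine.

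Next I would split $\prod_i(1-E_i)$ by link type using the piecewise $m_i$ in (\ref{pa1}). Because $m_i=1$ for the GU link ($i=1$) and the UG link ($i=N+1$), those two endpoint factors carry $2\sigma_{\rm angle,u}^2$ in the exponent and are pulled out explicitly; the intermediate UU links ($i=2,\dots,N$) have $m_i=2$, giving $4\sigma_{\rm angle,u}^2$ and grouping into the $\prod_{i=2}^{N}$ term. This reproduces the first two bracketed factors and the UU product of (\ref{equ22}). The surviving product $\prod_{i=1}^{N+1}(1-G_i)$ is precisely the last line of (\ref{equ22}) once I substitute the threshold from (\ref{equ17}), which turns the explicit Meijer-$G$ argument of (\ref{equ18}) into $\frac{\alpha_i\beta_i}{A_ih_i^{({\rm l})}}h_{{\rm th},i}$.

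I do not anticipate a genuine analytical obstacle; the argument is essentially bookkeeping built on the product over the $N+1$ independent links of a DF chain. The one step that truly must be gotten right is the factorization (\ref{eqfactor}): if $1-p_i$ is treated as an opaque three-term quantity rather than a product, the resulting expression does not separate and cannot be matched to the closed form (\ref{equ22}). A secondary point requiring care is bookkeeping the exponent denominators after inserting the correct $m_i$ for each link, which is exactly what yields the $2\sigma_{\rm angle,u}^2$-versus-$4\sigma_{\rm angle,u}^2$ distinction between the endpoint and intermediate factors.
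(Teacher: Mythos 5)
Your proposal is correct and follows exactly the paper's route: the paper proves Theorem \ref{the3} by substituting the link outage probability \eqref{equ18} into the DF product formula \eqref{equ21}, and your factorization $1-p_i=(1-E_i)(1-G_i)$ together with the $m_i$ case split is precisely the bookkeeping that substitution entails. No gaps.
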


\begin{proof}
The proof is straightforward by submitting \eqref{equ18} into \eqref{equ21}.
\end{proof}

\subsubsection{Asymptotic bound}
Due to the impact of the intermediate links performance bound in \eqref{equ19}, the corresponding system performance is also limited by AoA fluctuation when the transmitted power is large. The asymptotic bound for the end-to-end outage probability is given in the following theorem.

\begin{theorem}
For the UAV-based FSO communication system, the asymptotic bound of the end-to-end outage probability $P_{\rm{bound}}^{{\rm{out}}}$ for large transmit power ${P_{\rm{t}}}$ is given by
\begin{eqnarray}
\!\!\!\!\!\!\!P_{\rm{bound}}^{{\rm{out}}} \!\!\!\!\! &=& \!\!\!\!\! 1 - \left[ {1 - \exp \left( { - \frac{{\theta _{{\rm{FoV}},{\rm{1}}}^2}}{{{\rm{2}}\sigma _{{\rm{angle,u}}}^{\rm{2}}}}} \right)} \right]\nonumber \\
& \times& \!\!\!\!\!\! \left[ \!{1\! -\! \exp\! \left( \!\!{ - \frac{{\theta _{{\rm{FoV}},{N} \!+\! {\rm{1}}}^2}}{{{\rm{2}}\sigma _{{\rm{angle,u}}}^{\rm{2}}}}}\!\! \right)}\! \right]\!\! \prod\limits_{i = {\rm{2}}}^{{N}} \!{\!\left[\! {{\rm{1}}\! -\! \exp\! \left(\!\! { - \frac{{\theta _{{\rm{FoV}},i}^2}}{{{\rm{4}}\sigma _{{\rm{angle,u}}}^{\rm{2}}}}} \!\!\right)}\!\! \right]}\!\!.
\label{equ23}
\end{eqnarray}
\label{the4}
\end{theorem}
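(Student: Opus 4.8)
The plan is to obtain \eqref{equ23} as the large-power limit of the exact end-to-end outage probability, leaning entirely on the per-link asymptotic result already established in \emph{Theorem \ref{the2}}. The natural starting point is the product structure \eqref{equ21}, namely $P^{{\rm out}} = 1 - \prod_{i=1}^{N+1}(1-p_i)$, combined with the exact link expression \eqref{equ18}.

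First I would isolate the only $P_{\rm t}$-dependent piece of each factor. In \eqref{equ18} the Meijer's G-function carries the argument $\frac{\alpha_i\beta_i\theta_{{\rm FoV},i}}{A_i h_i^{({\rm l})} RP_{\rm t}}\sqrt{\Upsilon_{\rm th}\Lambda/2}$, which by \eqref{equ17} is proportional to $h_{{\rm th},i}$ and hence decays like $1/P_{\rm t}$. As $P_{\rm t}\to\infty$ this argument tends to zero, so by the small-argument behaviour of $G_{2,4}^{3,1}[\cdot]$ the whole Meijer-G term vanishes, and each exact $p_i$ collapses to its leading constant $\exp\!\big(-\theta_{{\rm FoV},i}^2/(2m_i\sigma_{\rm angle,u}^2)\big)$, which is precisely $p_{{\rm bound},i}$ from \eqref{equ19}. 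This vanishing is the only conceptually nontrivial point, and I expect it to be the crux; however, it is exactly the content of the proof of \emph{Theorem \ref{the2}} (Appendix \ref{appc}), so I would simply invoke $\lim_{P_{\rm t}\to\infty} p_i = p_{{\rm bound},i}$ rather than redo the asymptotic estimate. Granted this, the remainder of the proof is pure bookkeeping.

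Next I would substitute these limits into \eqref{equ21} to obtain $P_{\rm bound}^{{\rm out}} = 1 - \prod_{i=1}^{N+1}\big(1 - p_{{\rm bound},i}\big)$, and then make the product explicit by splitting it according to link type through the definition \eqref{pa1} of $m_i$. For $i=1$ (GU) and $i=N+1$ (UG) we have $m_i=1$, yielding the two factors carrying $2\sigma_{\rm angle,u}^2$ in the exponent; for $i=2,\dots,N$ (UU) we have $m_i=2$, yielding the remaining factors carrying $4\sigma_{\rm angle,u}^2$. Collecting the GU factor, the UG factor, and the UU product in this order reproduces \eqref{equ23}. Equivalently, one may read the bound directly off the exact form \eqref{equ22}: dropping its final $\prod_{i=1}^{N+1}(1-\text{Meijer-G})$ factor, which tends to $1$, leaves exactly the stated expression, with the regrouping by $m_i$ already performed there. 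I would present the first route as the main argument and note the second as a consistency check.
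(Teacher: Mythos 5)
Your proposal is correct and follows essentially the same route as the paper: the paper's proof of \emph{Theorem \ref{the4}} is exactly the substitution of the per-link asymptotic bound \eqref{equ19} into the product formula \eqref{equ21}, with the split by $m_i$ in \eqref{pa1} giving the GU/UG factors (denominator $2\sigma_{\rm angle,u}^2$) and the UU factors (denominator $4\sigma_{\rm angle,u}^2$). Your additional remarks on why the Meijer-G terms vanish are just a restatement of the content of Appendix \ref{appc}, which the paper likewise invokes rather than repeats.
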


\begin{proof}
The proof is straightforward by substituting \eqref{equ19} into \eqref{equ21}.
\end{proof}

\begin{remark}
It is worth mentioning that the asymptotic bound $P_{\rm{bound}}^{{\rm{out}}}$ in \eqref{equ23} deteriorates with the increase of the number of relays ${N}$.
Interestingly, increasing relay number worsens the outage performance when the system outage performance has attained the bound in \eqref{equ23}.
\label{rem2}
\end{remark}

\section{Parameter Optimizations}
\label{section4}
In this section, the key parameter optimizations for the UAV-based FSO communications are investigated.
Specifically, the beam width, the FoV, and  the UAV locations are optimized.

\subsection{Beam Width Adjustment}
Since hovering UAVs have larger deviations of the position and orientation than fixed ground platforms, pointing error in UAV-based links is more severe, hence the adjustment for beam width is especially important to improve the diversity order gain and outage performance for different communication conditions.

In (\ref{equ8}), the ratio between the squared equivalent beam width and displacement variance is $\zeta _i^2 = {w_z}_{_{{\rm{eq}}},i}^2/(4\sigma _{{\rm{s}},i}^2) \approx [\omega _{z,i}^2 + 3/(2\sqrt 2)]/(4\sigma _{{\rm{s}},i}^2)$. In (\ref{equ4}), $\beta_i$ denotes the parameters related to small-scale eddy in atmospheric turbulence.
According to \cite{FSO7} and \cite{FSO13}, it can be known that the condition $\zeta _i^2 < {\beta _i}$ represents that pointing error becomes dominant in relation to atmospheric turbulence in the ${i{{\rm{th}}}}$ link.
Since the turbulence parameter ${\beta _i}$, which is related to the atmospheric condition, cannot be chosen arbitrarily, adjusting beam width ${w_{z,i}}$ to satisfy $\zeta _i^2  \ge  {\beta _i}$ is an appropriate measure.
By letting $\zeta _i^2{\rm{ = }}{\beta _i}$, the minimum value of received beam width in the ${i{{\rm{th}}}}$ link is given by
\begin{equation}
\omega _{z,i}^{\min }{\rm{ = }}\sqrt {4\beta_i \sigma _{{\rm{s}},i}^2 - \frac{3}{2\sqrt 2}},
\label{p141}
\end{equation}
Here, to avoid the severe impairment of pointing error, the beam width of each link in the FSO relaying system should satisfy
\begin{equation}
\omega _{z,i} \ge \omega _{z,i}^{\min },\; i = 1,\cdots, N + 1.
\label{beam}
\end{equation}
It is noteworthy to mention that the power received at fixed-size detector reduces with an increase of the beam width.
Consequently, excessive increase of the beam width is also inadvisable.

\subsection{FoV Optimization With the Beam Width Constraint}
 Due to the asymptotic bound \eqref{equ23}, the achievable end-to-end outage probability is limited.
 Therefore, after determining the minimum beam width, the FoV optimization problem should be considered to minimize the end-to-end outage probability, i.e.,
\begin{align}
 &\quad \mathop {\min }\limits_{\theta_{{\rm{FoV}},1},...,\theta_{{\rm{FoV}},N+1}}  P^{{\rm{out}}} \nonumber \\
\rm{s.t.}
 &\quad \omega _{z,i} \ge \omega _{z,i}^{\min }, \ i = 1,...,{N} + 1.
\label{equ25}
\end{align}

Since FoV adjustment of one link does not affect the other links,
problem (\ref{equ25}) is reduced to optimizing FoV of each link individually, i.e.,
\begin{align}
 &\quad \mathop {\min }\limits_{\theta_{{\rm{FoV}},i}}  p_i \nonumber \\
\rm{s.t.}
 &\quad \omega _{z,i} \ge \omega _{z,i}^{\min }.
\label{equ25_1}
\end{align}

Considering the complexity of the exact link outage probability and the demand of practical FSO communications, we optimize the system performance for large $P_{\rm{t}}$.
By substituting \eqref{plusFOV2} and \eqref{equ17} into \eqref{equ20}, we obtain the link outage probability for large $P_{\rm{t}}$ with respect to $\theta_{{\rm{FoV}},i}$ as
 \begin{equation}
\begin{aligned}
{p_i} \simeq L(\theta _{{\rm{FoV}},i}^2) + \Theta \left[ {1 - L(\theta _{{\rm{FoV}},i}^2)} \right]{\left( {{\theta_{{\rm{FoV}},i}}} \right)^{{\beta _i}}},
\end{aligned}
\label{plusFOV3}
\end{equation}
where $L(\theta _{{\rm{FoV}},i}^2)$ and $\Theta$ are defined as
\begin{eqnarray}
\left\{ \begin{array}{l}
L(\theta _{{\rm{FoV}},i}^2) = \exp \left( { - \frac{{\theta _{{\rm{FoV}},i}^2}}{{2m_i\sigma _{{\rm{angle}},{\rm{u}}}^{\rm{2}}}}} \right)  \\
\Theta  = \frac{{\zeta _i^2\Gamma ({\alpha _i} - {\beta _i})}}{{\Gamma ({\alpha _i})\Gamma ({\beta _i})(\zeta _i^2 - {\beta _i}){\beta _i}}}{\left( {\frac{{{\alpha _i}{\beta _i}}}{{{A_i}h_i^{({\rm{l}})}R{P_{\rm{t}}}}}\sqrt {\frac{{{\Upsilon _{{\rm{th}}}}\Lambda }}{2}} } \right)^{{\beta _i}}}
\end{array} \right..
\end{eqnarray}

By taking derivative of \eqref{plusFOV3} with respect to $\theta _{{\rm{FoV}},i}$ and setting it to be zero, we obtain the following nonlinear equation
\begin{eqnarray}
&&\!\!\!\!\!\!\!\!m_i \Theta \sigma_{\rm{angle,u}}^{2} \beta_{i}\left(\theta_{{\rm{FoV}},i}\right)^{\beta_{i}-2}\left[1-L\left(\theta_{\mathrm{FoV}, i}^{2}\right)\right]  \nonumber\\
&&\;\;\;+\Theta\left(\theta_{{\rm{FoV}},i}\right)^{\beta_{i}} L\left(\theta_{\mathrm{FoV}_{i}}^{2}\right)-L\left(\theta_{\mathrm{FoV}, i}^{2}\right)=0.
\label{plusFOV1}
\end{eqnarray}
In this paper, the asymptotically optimal $\theta _{{\rm{FoV}},i}$ is updated to match different communication conditions by numerically solving \eqref{plusFOV1}.

\subsection{UAV Location Optimization}
To improve the reliability of the end-to-end link, the UAVs' positions can be optimized.
To avoid the undesirable bound of AoA fluctuation (i.e. eq. (\ref{equ19})),
the FoV of each receiver is typically set to be a large fixed value.
In this subsection, we assume that the FoV is sufficiently large so that the AoA fluctuation is neglectable.

Here, we consider an FSO relaying system with $N$ UAVs and $N_0$ obstacles.
To facilitate the optimization, all UAVs are assumed to be located at the same height.
For all UAVs, the two-dimensional coordinates in the XY plane are optimized.
Without loss of generality, the positions of UAVs are set to be $(x_i, y_i), i=1,\cdots, N$. Moreover, $(x_0, y_0)$ and $(x_{N+1}, y_{N+1})$ are the coordinates of source node and destination node, respectively.
Due to the existence of obstacles, the UAVs cannot be deployed on the positions of obstacles.
An indicator function ${f_n}({x_i},{y_i},{x_{i + 1}},{y_{i + 1}})$  is used to indicate whether the $i{\rm th}$ link is block by the $n{\rm th}$ obstacle. Specifically, if the $n{\rm th}$ obstacle blocks the $i{\rm th}$ link, ${f_n}({x_i},{y_i},{x_{i + 1}},{y_{i + 1}}) \geq 0$, otherwise ${f_n}({x_i},{y_i},{x_{i + 1}},{y_{i + 1}}) < 0$.
Here, the UAV location optimization problem can be formulated as
\begin{eqnarray}
&&\mathop {\min }\limits_{{x_{\rm{1}}},{y_{\rm{1}}},...,{x_N},{y_N}} {P^{{\rm{out}}}} \nonumber \\
{\rm{s}}{\rm{.t}}{\rm{.}}&&{f_n}({x_i},{y_i},{x_{i + 1}},{y_{i + 1}}) < 0, \nonumber \\
&&  i \in 0,1,\cdots,N;  n \in 1,2,\cdots,{N_0}.
\label{add1}
\end{eqnarray}

To solve the optimization problem,
we assume that the atmospheric turbulence is dominant in relation to the pointing error (i.e., $\zeta _i^2 > {\beta _i},i = 1,\cdots,N + 1$), and the FoV ${\theta _{{\rm{FoV}},i}}$ is large.
With these assumptions, the expression (\ref{equ20}) can be further written as
\begin{eqnarray}
\!\!\!\!\!\!{p_i}\!\!\!\!\!\! &\approx&\!\!\!\!\!\! \frac{{\zeta _i^2\Gamma ({\alpha _i} - {\beta _i})}}{{\Gamma ({\alpha _i})\Gamma ({\beta _i})(\zeta _i^2 - {\beta _i}){\beta _i}}}{\left( {\frac{{{\alpha _i}{\beta _i}}}{{{A_i}h_i^{({\rm{l}})}}}} \right)^{{\beta _i}}}{\left( {h_{{\rm{th,}}i}^{}} \right)^{{\beta _i}}}  \nonumber\\
\!\!\!&=&\!\!\!\!\!\!  \underbrace {\frac{{\zeta _i^2\Gamma ({\alpha _i} \!\!-\!\! {\beta _i})}}{{\Gamma ({\alpha _i})\Gamma ({\beta _i})(\zeta _i^2 \!\!-\!\! {\beta _i}){\beta _i}}}\!\!{\left(\!\!\! {\frac{{{\alpha _i}{\beta _i}}}{{{A_i}h_i^{({\rm{l}})}R}}\sqrt {\!\frac{{{\Upsilon _{{\rm{th}}}}\!\sigma _{{\rm{n,}}i}^{\rm{2}}}}{2}} } \right)^{\!\!\!{\beta _i}}}}_{\triangleq a_i} {\!\!\left( {{P_{\rm{t}}}} \right)^{\!\!{\rm{ - }}{\beta _i}}}\!\!.
\end{eqnarray}

\begin{figure*}
\centering
\includegraphics[width=18cm]{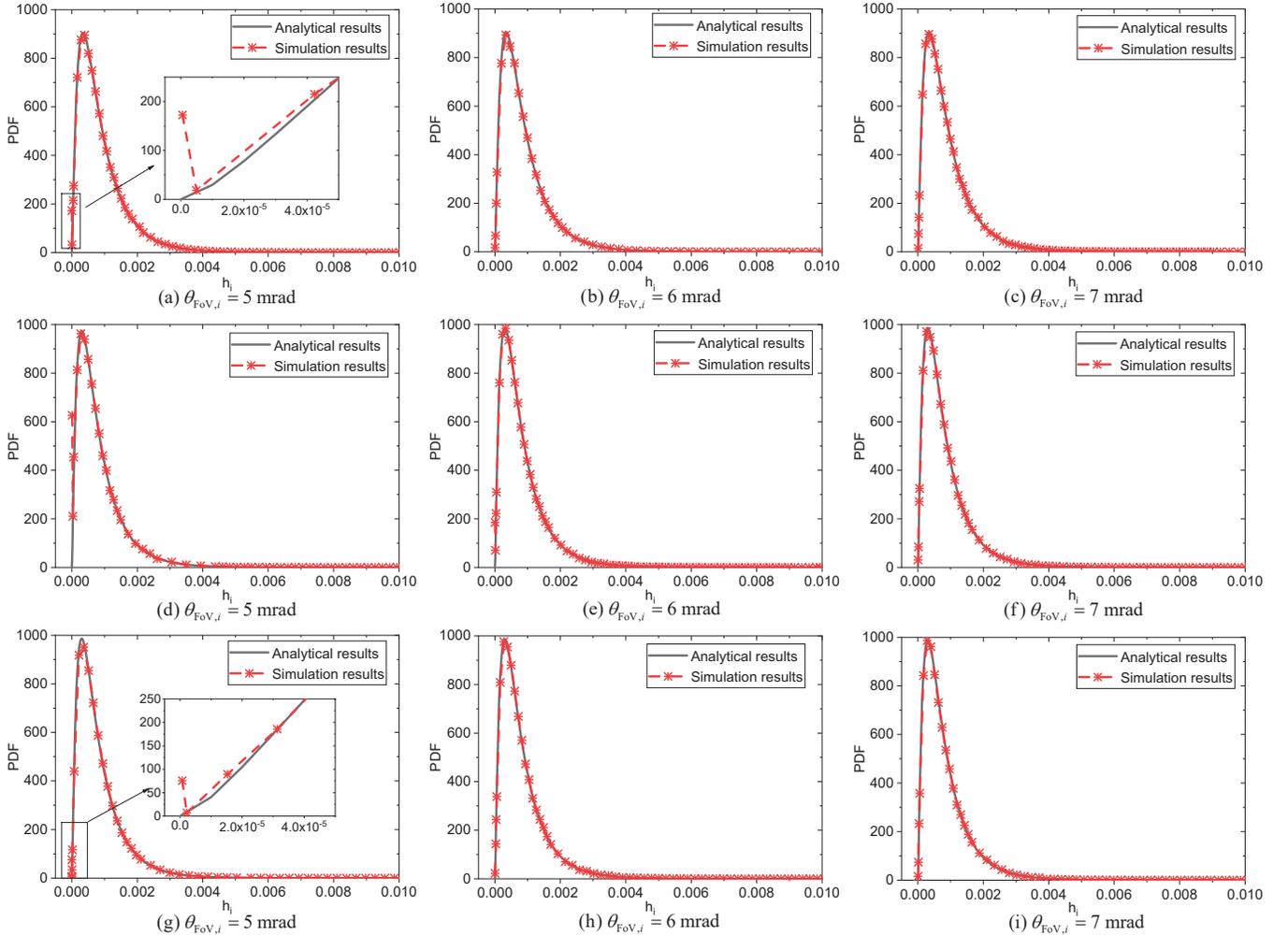}
\caption{Comparisons of the derived PDF (\ref{equ13}) and the simulated PDF with different $\theta_{{\rm{FoV}},i}$ for different links when $Z_i = 250$ m and $w_{z_i} = 2$ m, where (a)-(c) are for the GU link, (d)-(f) are for the UU links, (g)-(i) are for the UG link}
\label{fig4}
\end{figure*}

We assume that the $i_{{\rm{max}}}{{\rm{th}}}$ link is the longest with parameters $a_{i_{\max }}$ and $\beta_{i_{\max }}$.
Since the parameter ${\beta _i}$ decreases monotonically with an increase of the ${i{{\rm{th}}}}$ link distance, we have
\begin{eqnarray}
\mathop {\lim }\limits_{{P_{\rm{t}}} \to \infty } \frac{{1 - \prod\nolimits_{i = 1}^{N + 1} {\left( {1 - p_i} \right)} }}{{p_{{i_{{\rm{max}}}}}}} \!\!\!\!&=& \!\!\!\! \mathop {\lim }\limits_{{P_{\rm{t}}} \to \infty } \frac{{\sum\nolimits_{i = 1}^{N + 1} {p_i} }}{{p_{{i_{\max }}}}} \nonumber \\
 &=& \!\!\!\! \mathop {\lim }\limits_{{P_{\rm{t}}} \to \infty } \frac{{\sum\nolimits_{i = 1}^{N+ 1} {{a_i}{ P_{\rm{t}} ^{ - {\beta _i}}}} }}{{{a_{{i_{{\rm{max}}}}}}{P_{\rm{t}}^{ - {\beta _{{i_{\max }}}}}}}} \nonumber\\
 &=& \!\!\!\! 1,
  \label{MDL}
\end{eqnarray}
which indicates that the system outage probability is determined by the performance of the link having the maximum distance.
Therefore, for a large value of $P_{\rm t}$, problem (\ref{add1}) can be transformed into
\begin{eqnarray}
&&\mathop {\min }\limits_{{x_{\rm{1}}},{y_{\rm{1}}},\cdots,{x_N},{y_N}} \max \{ Z_1, Z_2,\cdots, Z_N\}\nonumber \\
{\rm{s}}{\rm{.t}}{\rm{.}}&& {f_n}({x_i},{y_i},{x_{i + 1}},{y_{i + 1}}) < 0\nonumber\\
 && i \in 0,1,\cdots,N; n \in 1,2,\cdots,{N_0},
 \label{add2}
\end{eqnarray}
where ${Z_i} = \sqrt {{{({x_i} - {x_{i + 1}})}^2} + {{({y_i} - {y_{i + 1}})}^2}} $ is the distance of the $i{\rm th}$ link.
To solve problem (\ref{add2}) effectively, the grouping optimization or multi-variate optimization can be used, which can be implemented by using the embedded function ``fmincon" in MATLAB \cite{FSO6}.

\section{Numerical Results}
\label{section5}
In this section, some numerical results will be presented to support the theoretical claims made in previous sections.
Unless stated otherwise, the default values of the parameters used for simulations are given by
$\lambda  = 1550$ nm,
$C_{n}^{2}=5\times10^{-14}\; {\rm m}^{-2 /3}$,
$R=0.9$,
$\Phi  = 1\; {\rm{km}}^{ - 1}$,
$\sigma _{\rm{n}}^2 = {6.4\times10^{ - 14}}$ corresponding to ${\theta _{{\rm{FoV}},i}} = 8$ mrad,
receiver lens radius ${r_a} = 5$ cm,
the standard deviation of UAV and ground displacement ${\sigma _{{\rm{p,u}}}} = {\sigma _{{\rm{p,g}}}} = 10$ cm,
the standard deviation of UAV orientation ${\sigma _{{\rm{angle,u}}}} = 1.2$ mrad,
and ${\gamma _{{\rm{th}}}} = 10$ dB \cite{FSO5}.

\subsection{Channel Modelling Results}
The derivation of the PDF in  (\ref{equ13}) is based on the assumption that the four types of impairments in (\ref{equ2}) are independent of each other. To verify the accuracy of (\ref{equ13}), we compare
the curves of (\ref{equ13}) with the simulated PDFs for different links when $Z_i = 250$ m and $w_{z_i} = 2$ m, as shown in Fig. \ref{fig4}. Note that Figs. \ref{fig4}(a)-(c) are for the GU link, Figs. \ref{fig4}(d)-(f) are for the UU links, and Figs. \ref{fig4}(g)-(i) are for the UG link.
For the simulated PDFs, the four impairments in (\ref{equ2}) are set to be correlated random variables.
As can be seen in Figs. \ref{fig4}(a), (d) and (g),
when $\theta_{{\rm{FoV}}, i}=5$ mrad, the differences between analytical results when $h_i = 0$ and simulation results are large. Specifically, for the GU, UU and UG links, $f_{h_i}(0)$ in \eqref{equ13} are 1.3, 1.7$\times10^{-4}$ and 1.7$\times 10^{-4}$, respectively.
But the simulated values at $h_i=0$ are 172.7, 626.6, and 75.6, respectively.
For other values of $h_i$ in Fig. \ref{fig4}(a), (d) and (g), the analytical results match the simulation results well.
This indicates that small FoV in each link will result in the mismatch between the analytical results and simulation results when $h_i=0$.
When $\theta_{{\rm{FoV}}, i} \geq 6\; {\rm mrad}$ (i.e., $\theta_{{\rm{FoV}}, i} \geq 5 {\sigma _{{\rm{angle,u}}}}$), as shown in Figs. \ref{fig4}(b), (c), (e), (f), (h) and (i), the analytical curves match the simulation results well.
This indicates that the derived tractable expression (\ref{equ13}) is accurate when $\theta_{{\rm{FoV}}, i}$ is large and can be directly used for performance analysis.

\subsection{Performance Analysis Results}
\begin{figure}
\centering
\includegraphics[width=8.5cm]{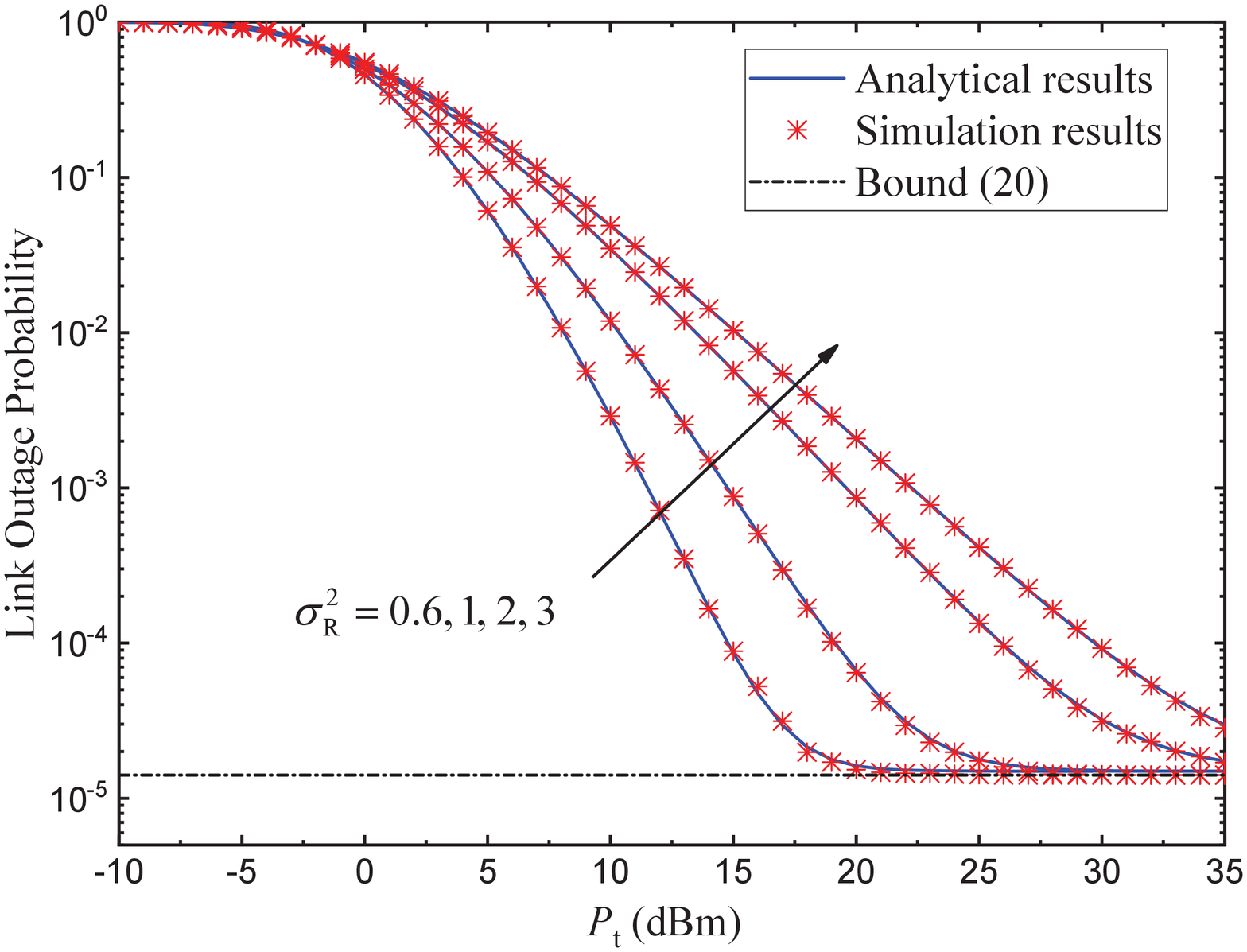}
\caption{Link outage probability for an UU link versus average transmitted optical power under different turbulence conditions when $\theta _{{\rm FoV }} = 8$ mrad, $w_{z} = 2$ m and $Z = 250$ m}
\label{fig5}
\end{figure}

In this subsection, we will examine the accuracy of the derived link outage probability and the end-to-end outage probability under different scenarios.

Fig. \ref{fig5} plots the link outage probability for a UU link versus the average transmitted optical power  $P_{\rm t }$ under different turbulence conditions $\sigma _{{\rm R }}^2 = 0.6, 1, 2, 3$ when $\theta _{{\rm FoV }} = 8$ mrad, $w_{z} = 2$ m and $Z= 250$ m. For comparison, the asymptotic bound (\ref{equ19}) in \emph{Theorem \ref{the2}} is also presented.
For small $P_{\rm t }$, the values of link outage probabilities decrease with the increase of $P_{\rm t }$ or with the decrease of $\sigma _{{\rm R }}^2$.
For large $P_{\rm t }$, the values of link outage probabilities tend to stable values, which are independent of the average transmitted optical power and the turbulence condition. Moreover, the stable values approach the asymptotic bound (\ref{equ19}). This verifies the accuracy of (\ref{equ19}) in \emph{Theorem \ref{the2}}.
Moreover, it should be emphasized that all analytical results in Fig. \ref{fig5} present close agreement with Monte-Carlo simulation results, and the accuracy of the derived expression \eqref{equ18} in \emph{Theorem \ref{the1}} is verified.

\begin{figure}
\centering
\includegraphics[width=8.5cm]{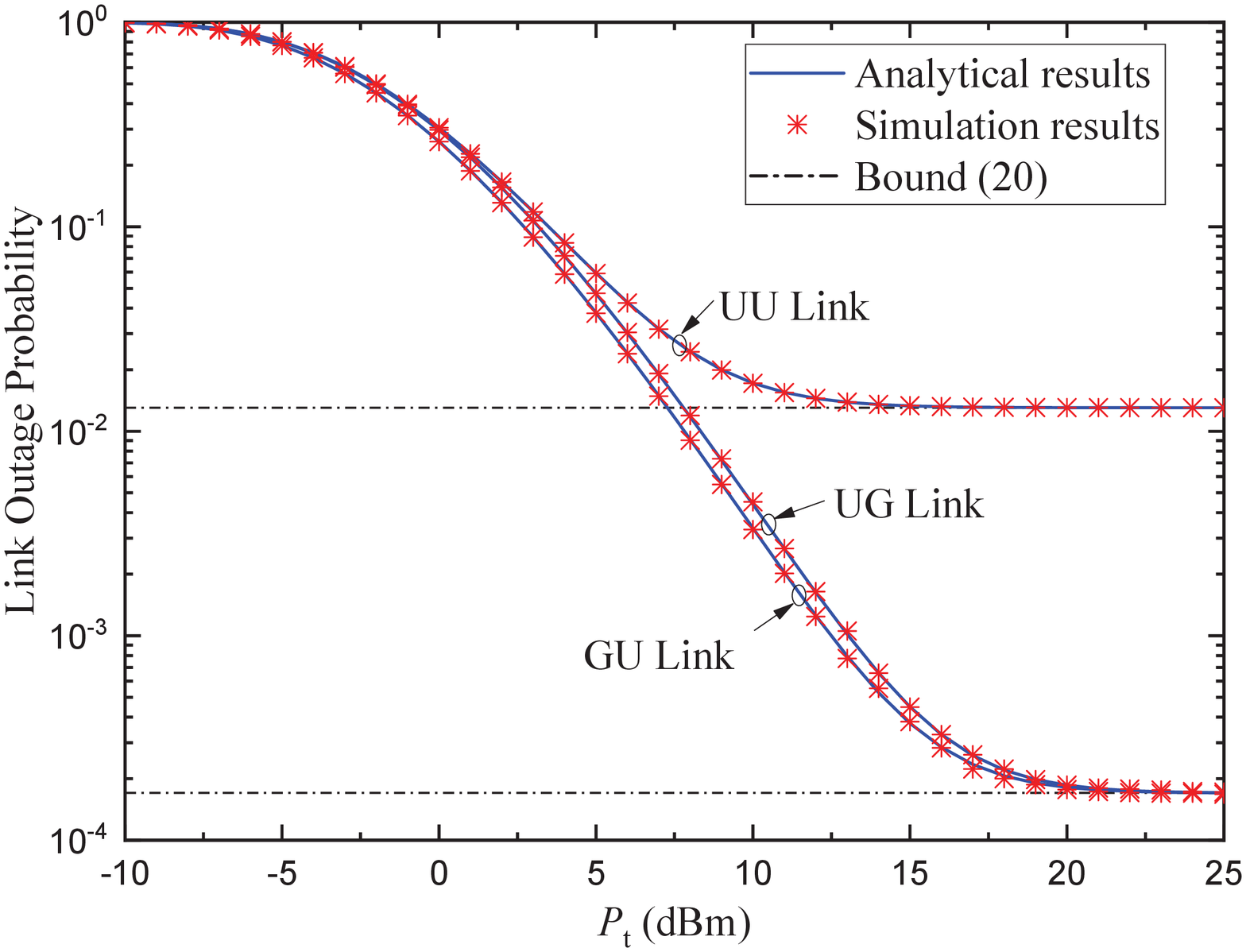}
\caption{Link outage probability versus average transmitted optical power for different kinds of links when $\theta _{{\rm FoV,}i} = 5$ mrad, $\sigma_{{\rm R},i}^2 = 1$, $w_{z,i} = 2$ m and $Z_i = 250$ m}
\label{fig5_1}
\end{figure}

Fig. \ref{fig5_1} compares the link outage performance for different kinds of links when $\theta _{{\rm FoV,}i} = 5$ mrad, $\sigma_{{\rm R},i}^2 = 1$, $w_{z,i} = 2$ m and $Z_i = 250$ m.
In addition to obtaining the similar conclusions in Fig. \ref{fig5}, some other interesting insights can also be found in Fig. \ref{fig5_1}.
As shown in Fig. \ref{fig5_1}, the UU link achieves the maximum link outage probability, and the GU link achieves the comparable performance to the UG link.
Moreover, the GU link achieves the same asymptotic outage performance as the UG link,
and the asymptotic bound of the UU link is always larger than that of the UG/GU link.
These insights verify the analysis in \emph{Remark \ref{rem1}}.

\begin{figure}
\centering
\includegraphics[width=8.5cm]{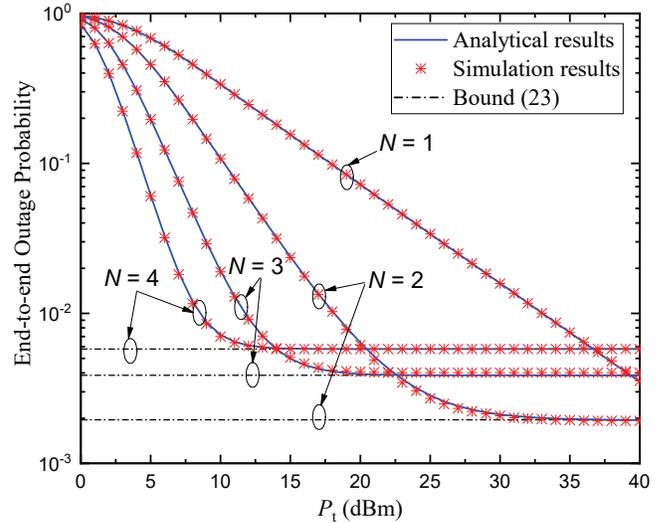}
\caption{End-to-end outage probability versus average transmitted optical power for different numbers of relays when $\theta _{{\rm FoV,}i} = 6$ mrad, $w_{z,i} = 2$ m and the distance between source and destination nodes $Z_{\rm SD} = 2$ km}
\label{fig6}
\end{figure}

Fig. \ref{fig6} shows the end-to-end outage probability versus the average transmitted optical power ${P_{\rm{t}}}$ for different numbers of relays $N$ when ${\theta _{{\rm FoV },i}} = 6$ mrad, $w_{z, i} = 2$ m and the distance between source and destination nodes $Z_{\rm SD} = 2$ km. For comparison, the asymptotic bound (\ref{equ23}) in \emph{Theorem \ref{the4}} is also presented.
In this simulation, the obstacles are not considered, and $N$ UAV nodes are uniformly deployed between the source node and the destination node.
For small ${P_{\rm{t}}}$, the end-to-end outage probability performance improves with the increase of ${P_{\rm{t}}}$ or $N$.
However, at large ${P_{\rm{t}}}$, the outage performance does not improve with ${P_{\rm{t}}}$, but approaches the asymptotic bound \eqref{equ23}, which verifies the accuracy of (\ref{equ23}) in \emph{Theorem \ref{the4}}.
Moreover, the asymptotic bound \eqref{equ23} varies with $N$. Specifically, the asymptotic bound with $N=4$ achieves the largest value, which coincides with the conclusion in \emph{Remark \ref{rem2}}. Similar to Fig. \ref{fig5}, the accuracy of the derived expression of \eqref{equ22} in \emph{Theorem \ref{the3}} is also verified by using simulations.

Fig. \ref{figadd} shows the end-to-end outage probability versus the standard deviation ${\sigma _{{\rm{angle,u}}}}$ with different numbers of relays when ${P_{\rm{t}}} = 20\;{\rm{dBm}}$, ${\theta _{{\rm{FoV,}}i}} = 6\;{\rm{mrad}}$, ${w_{{\rm{z}},i}} = 2\;{\rm{m}}$, and ${Z_{{\rm{SD}}}} = 2\;{\rm{km}}$.
As can be observed, with the increase of ${\sigma _{{\rm{angle,u}}}}$,
the end-to-end outage probability increases sharply and then tends to one.
This indicates that the system performance dramatically degrades with the increase of ${\sigma _{{\rm{angle,u}}}}$. Moreover, a large ${\sigma _{{\rm{angle,u}}}}$ (for example, $\geq$3.4 mrad) will directly result in a complete interruption of the system. Furthermore, when ${\sigma _{{\rm{angle,u}}}}$ is small, the outage probability performance improves with the number of UAV relays. Therefore, the UAVs' orientation fluctuation and the number of relays are two major concerns for practical system design.
In Fig. \ref{figadd}, all analytical results match well with simulation results, which also verifies the accuracy of the derived expression in \eqref{equ22}.

\begin{figure}
\centering
\includegraphics[width=8.5cm]{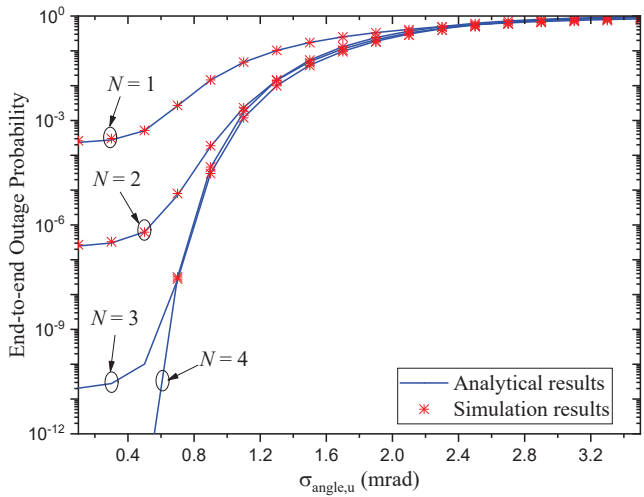}
\caption{End-to-end outage probability versus the standard deviation ${\sigma _{{\rm{angle,u}}}}$ with different numbers of relays when ${P_{\rm{t}}} = 20\;{\rm{dBm}}$, ${\theta _{{\rm{FoV,}}i}} = 6\;{\rm{mrad}}$, ${w_{{\rm{z}},i}} = 2\;{\rm{m}}$, and ${Z_{{\rm{SD}}}} = 2\;{\rm{km}}$}
\label{figadd}
\end{figure}

\subsection{Parameter Optimization Results}
In this subsection, to verify the efficiency of the proposed optimization schemes, the parameter optimization results will be shown.

Fig. \ref{fig7} shows the link outage probability of a UU link versus the normalized FoV ${\phi _{{\rm{FoV}}}}$ for different $P_{\rm{t}}$ values when $Z = 250$ m and ${w_{z}} = 2$ m.
Here, the normalized FoV is defined as ${\phi _{{\rm{FoV}}}} = {\theta _{{\rm{FoV}}}}/{\sigma _{{\rm{angle,u}}}}$, and the beam width $w_{z}$ is adjusted to satisfy (\ref{beam}).
As can be seen, with the increase of $P_{\rm{t}}$, the link outage probability performance improves.
Moreover, all analytical results accurately match simulation results in the entire normalized FoV range, which verifies the accuracy of (\ref{equ18}) in \emph{Theorem \ref{the1}}.
All curves tend to the asymptotic bound in (\ref{equ19}) with the decrease of ${\phi _{{\rm{FoV}}}}$, which indicates the correctness of \emph{Theorem \ref{the2}}.
Furthermore, with the increase of ${\phi _{{\rm{FoV}}}}$, the link outage probability decreases first and then increases.
It can be observed from Fig. \ref{fig7} that, when $P_{\rm t}=0, 5, 10, 15$ and 20 dBm,
the optimal values of ${\phi _{{\rm{FoV}}}}$ (in mrad) are 3.25, 4.25, 5.25, 6.08, and 6.92, respectively.
By solving \eqref{plusFOV1},
we obtain the corresponding asymptotically optimal ${\phi _{{\rm{FoV}}}}$ values as 1.94, 3.67, 4.98, 6.02, and 6.91.
The differences between the optimal ${\phi _{{\rm{FoV}}}}$ in Fig. \ref{fig7} and the asymptotically optimal ${\phi _{{\rm{FoV}}}}$ using \eqref{plusFOV1} are small when $P_{\rm{t}}$ is large.
This indicates that, when $P_{\rm{t}}$ is large, eq. \eqref{plusFOV1} can be directly used to determine the optimal FoV without time-consuming simulations.

\begin{figure}
\centering
\includegraphics[width=8.5cm]{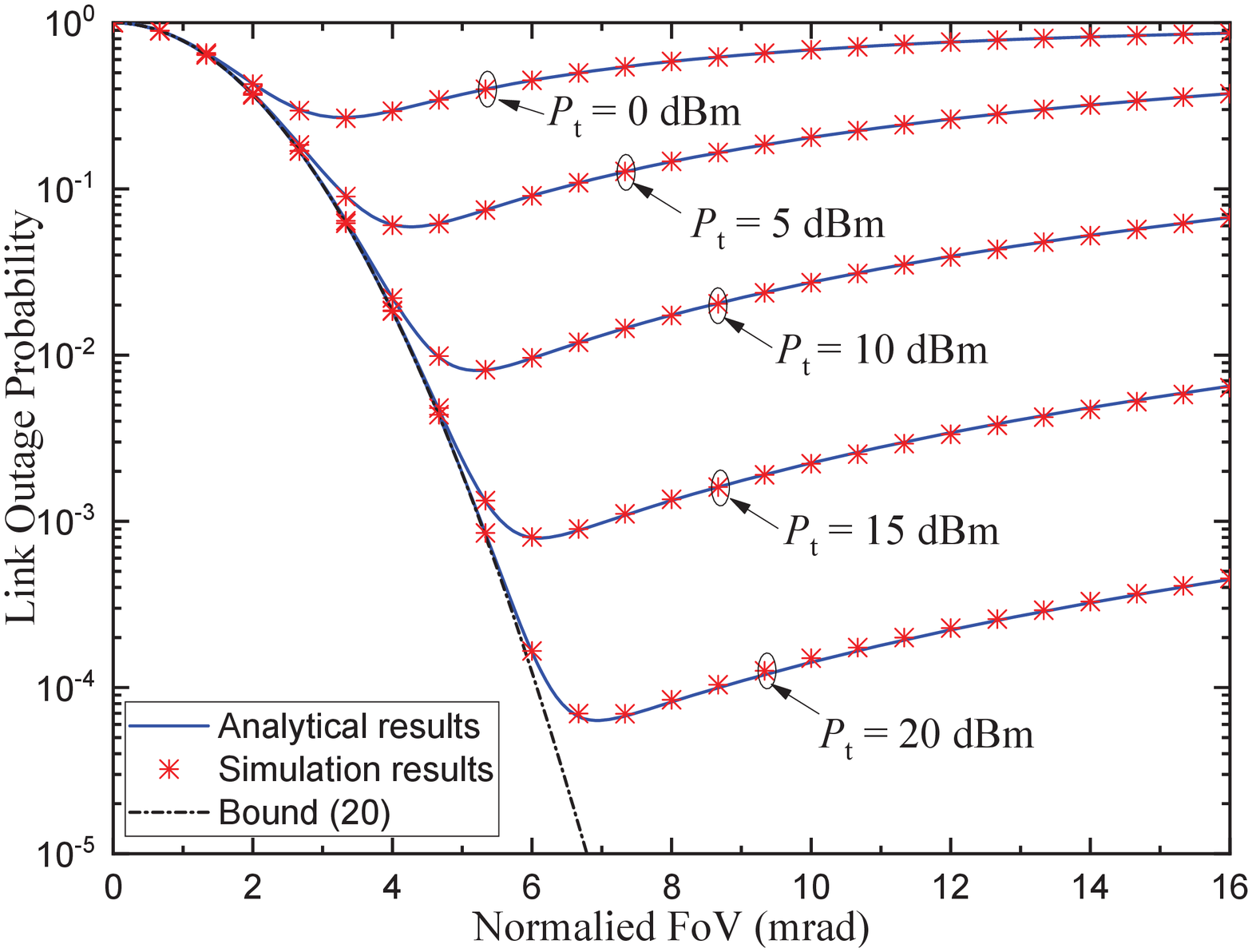}
\caption{Link outage probability of an UU link versus normalized FoV for different $P_{\rm{t}}$ when $Z= 250$ m and ${w_{z}} = 2$ m}
\label{fig7}
\end{figure}

\begin{figure}
  \centering
  \includegraphics[width=8.5cm]{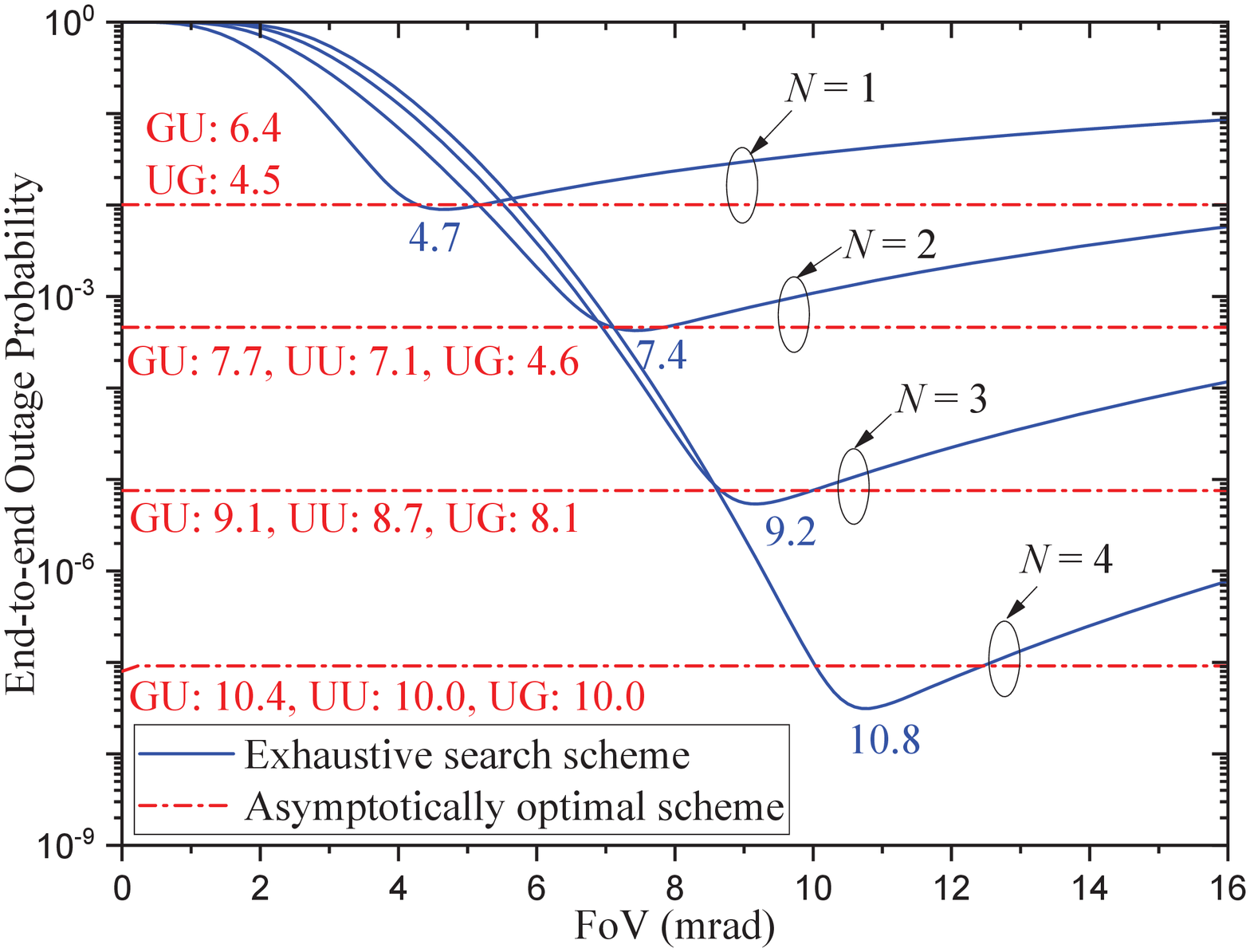}
  \caption{End-to-end outage probability versus FoV with different numbers of relays when ${P_{\rm{t}}} = 20\;{\rm{dBm}}$, ${\sigma _{{\rm{angle,u}}}} = 1.2\;{\rm{mrad}}$, ${\theta _{{\rm{FoV,}}i}} = 6\;{\rm{mrad}}$, ${w_{{\rm{z}},i}} = {\rm{4}}\;{\rm{m}}$, and ${Z_{{\rm{SD}}}} = 2\;{\rm{km}}$}
  \label{fig8}
\end{figure}

Fig. \ref{fig8} shows the end-to-end outage probability versus FoV with different numbers of relays when ${P_{\rm{t}}} = 20\;{\rm{dBm}}$, ${\sigma _{{\rm{angle,u}}}} = 1.2\;{\rm{mrad}}$, ${\theta _{{\rm{FoV,}}i}} = 6\;{\rm{mrad}}$, ${w_{{\rm{z}},i}} = {\rm{4}}\;{\rm{m}}$, and ${Z_{{\rm{SD}}}} = 2\;{\rm{km}}$.
In Fig. \ref{fig8}, the obstacles are not considered, and the UAV relays are placed equidistant.
The exhaustive search scheme and the asymptotically optimal scheme are provided.
In the exhaustive search scheme, the FoVs for the GU, UG and UU links are set to be identical.
In asymptotically optimal scheme, the optimal FoVs for all links are respectively obtained by \eqref{plusFOV1}. It can be seen that when FoV is increased, all curves of the exhaustive search scheme decrease first and then increase, this indicates that an optimal FoV exists for each curve. Specifically, for $N=1,2,3$ and 4, the optimal FoVs in the exhaustive search scheme are 4.7 mrad, 7.4 mrad, 9.2 mrad, and 10.8 mrad, as marked on the figure. The corresponding minimum outage probabilities are $8.95 \times 10^{-3}$, $4.26\times 10^{-4}$, $5.40\times 10^{-6}$, and $3.13\times 10^{-8}$, respectively.
Moreover, the FoVs of different links for the asymptotically optimal scheme by using \eqref{plusFOV1} are also provided on the figure.
As seen, the asymptotically optimal scheme can achieve comparable outage performance to the exhaustive search scheme, which verifies the accuracy of \eqref{plusFOV1}.

\begin{figure}
\centering
\includegraphics[width=8.5cm]{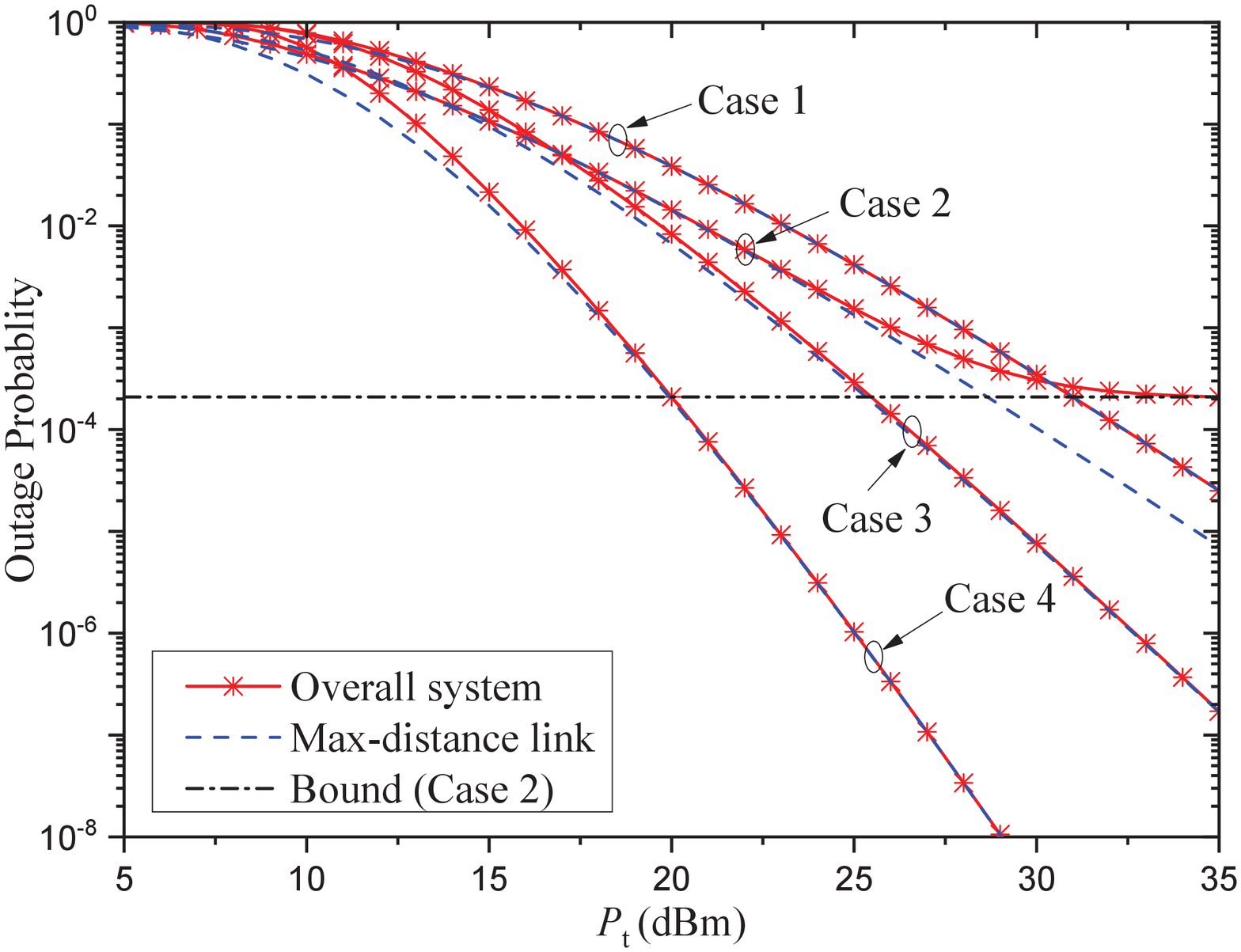}
\caption{Outage probability versus average link transmit power for different scenarios when ${Z_{\rm SD}} = 2 $ km and ${w_{z,i}} = 4$ m}
\label{fig9}
\end{figure}

To verify whether the link with maximum distance (i.e., max-distance link) dominates the end-to-end outage performance, Fig. \ref{fig9} shows the outage probability versus average transmit power $P_{\rm t}$ for different scenarios when ${Z_{\rm SD}} = 2 $ km and ${w_{z,i}} = 4$ m. In this simulation, the beam width satisfies $\zeta _i^2 > {\beta _i},i = 1,...,{N} + 1$. Four cases are considered, as shown in Table \ref{tab1}.
Note that the asymptotic bounds for Cases 1 and 3 are $1.39 \times 10^{-11}$, the asymptotic bound for Case 4 is $2.78 \times 10^{-11}$, which are too small and not plotted in the figure.
For cases with large FoV (i.e., Cases 1, 3 and 4) in Fig. \ref{fig9}, the outage probabilities of the max-distance link converge to that of the overall system when $P_{\rm t}$ is large (for example, when $P_{\rm t} \ge 12.5$ dBm for Case 1, when $P_{\rm t} \ge 27.5$ dBm for Case 3, or when $P_{\rm t} \ge 20$ dBm for Case 4), which verifies that the link with the maximum distance dominates the overall system performance, just as analyzed in \eqref{MDL}.
However, for the case with small FoV (i.e., Case 2), the outage probability of the max-distance link does not match with that of the overall system at large $P_{\rm t}$. In this case, the AoA fluctuation has a strong effect on outage performance.
Moreover, the performance of Case 2 is better than that of Case 1 before reaching the asymptotic bound \eqref{equ23} since larger FoV introduces more background noise.

\begin{table}
\renewcommand{\arraystretch}{1}
\caption{The setup of four cases in Fig. \ref{fig9}}
\label{tab1}
\centering
\begin{tabularx}{8.8cm}{|p{0.8cm}|p{1.1cm}|p{3.6cm}|X|}
\hline
\textbf{Cases} &  \textbf{Number of relays} & \textbf{Link distance (km)} & \textbf{FoV (mrad)} \\
\hline
Case 1 & 2 & $({Z_1},{Z_2},{Z_3}) = (0.5,0.5,1)$ & 12\\
\hline
Case 2 & 2 & $({Z_1},{Z_2},{Z_3}) = (0.5,0.5,1)$ & 7\\
\hline
Case 3 & 2 & $({Z_1},{Z_2},{Z_3}) = (0.5,0.7,0.8)$ & 12\\
\hline
Case 4 & 3 & $({Z_1},{Z_2},{Z_3},{Z_4}) = (0.4,0.5,0.5,0.6)$ &  12\\
\hline
\end{tabularx}
\end{table}

\begin{figure}
\centering
\includegraphics[width=9cm]{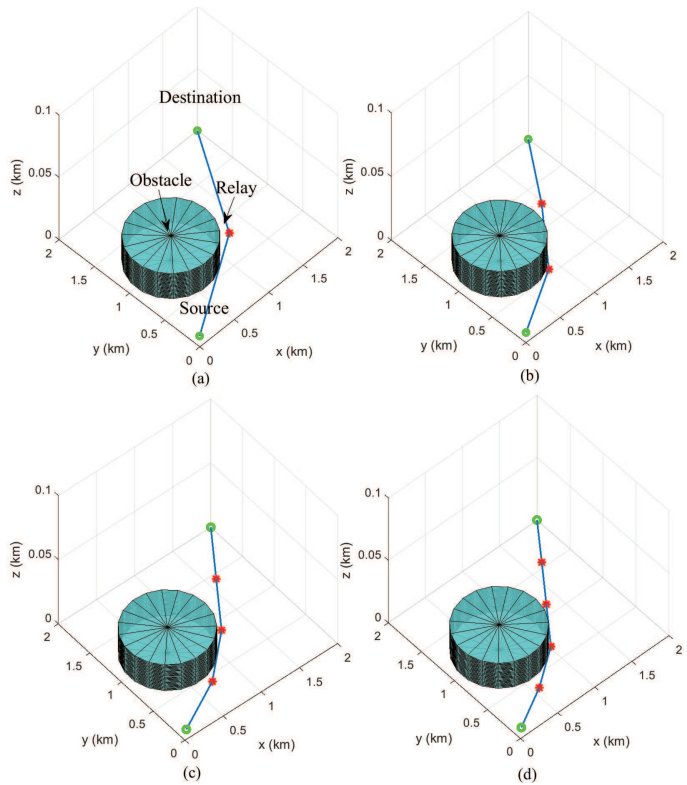}
\caption{Optimal relay deployment for the UAV-based FSO relaying system with one cylindrical obstacle. (a) one relay, (b) two relays, (c) three relays, (d) four relays}
\label{fig10}
\end{figure}

After optimizing the UAVs' locations, Fig. \ref{fig10} and Fig. \ref{fig11} show the deployment of UAVs under different scenarios.
Specifically, Fig. \ref{fig10} plots optimal relay deployment for the UAV-based FSO relaying system with one cylindrical obstacle.
In the simulation, the coordinates of the source and the destination are (0.1 km, 0.1 km) and (2 km, 2 km).
The obstacle is centered at (0.6 km, 1 km) with radius 0.5 km.
From Fig. \ref{fig10}(a) to Fig. \ref{fig10}(d), the optimal locations of UAVs and link distances are listed in Table \ref{tab2}.
By adding another cylindrical obstacle at (1.6 km, 1.2 km) with radius 0.2 km, Fig. \ref{fig11} shows the optimal relay deployment for the FSO relaying system. In this case, the optimal relay locations and link distances are provided in Table \ref{tab3}.
As can be seen from the above results, through optimizing the relays' deployment, the feasible positions of relays can always be found, and all link distances tend to be approximately equal.

\begin{table}
\renewcommand{\arraystretch}{1}
\caption{The optimal relay locations and link distances in Fig. \ref{fig10}}
\label{tab2}
\centering
\begin{tabularx}{9cm}{|p{1.2cm}|X|X|}
\hline
\textbf{Subfigures} &  \textbf{Relay locations (in km)} & \textbf{Link distances (in km)} \\
\hline
Fig. \ref{fig10}(a) & (1.2712, 0.8288) & 1.3795, 1.3795 \\
\hline
Fig. \ref{fig10}(b) & (0.8851, 0.5614), (1.4689, 1.2602) & 0.9106, 0.9106, 0.9106\\
\hline
Fig. \ref{fig10}(c) & (0.7056, 0.4193), (1.2111, 0.8809), (1.6056, 1.4404)& 0.6846, 0.6846, 0.6846, 0.6846\\
\hline
Fig. \ref{fig10}(d) & (0.5777, 0.3651), (1.0193, 0.6868), (1.3472, 1.1238), (1.6745, 1.5612) & 0.5463, 0.5463, 0.5463, 0.5463, 0.5463\\
\hline
\end{tabularx}
\end{table}

\begin{figure}
\centering
\includegraphics[width=9cm]{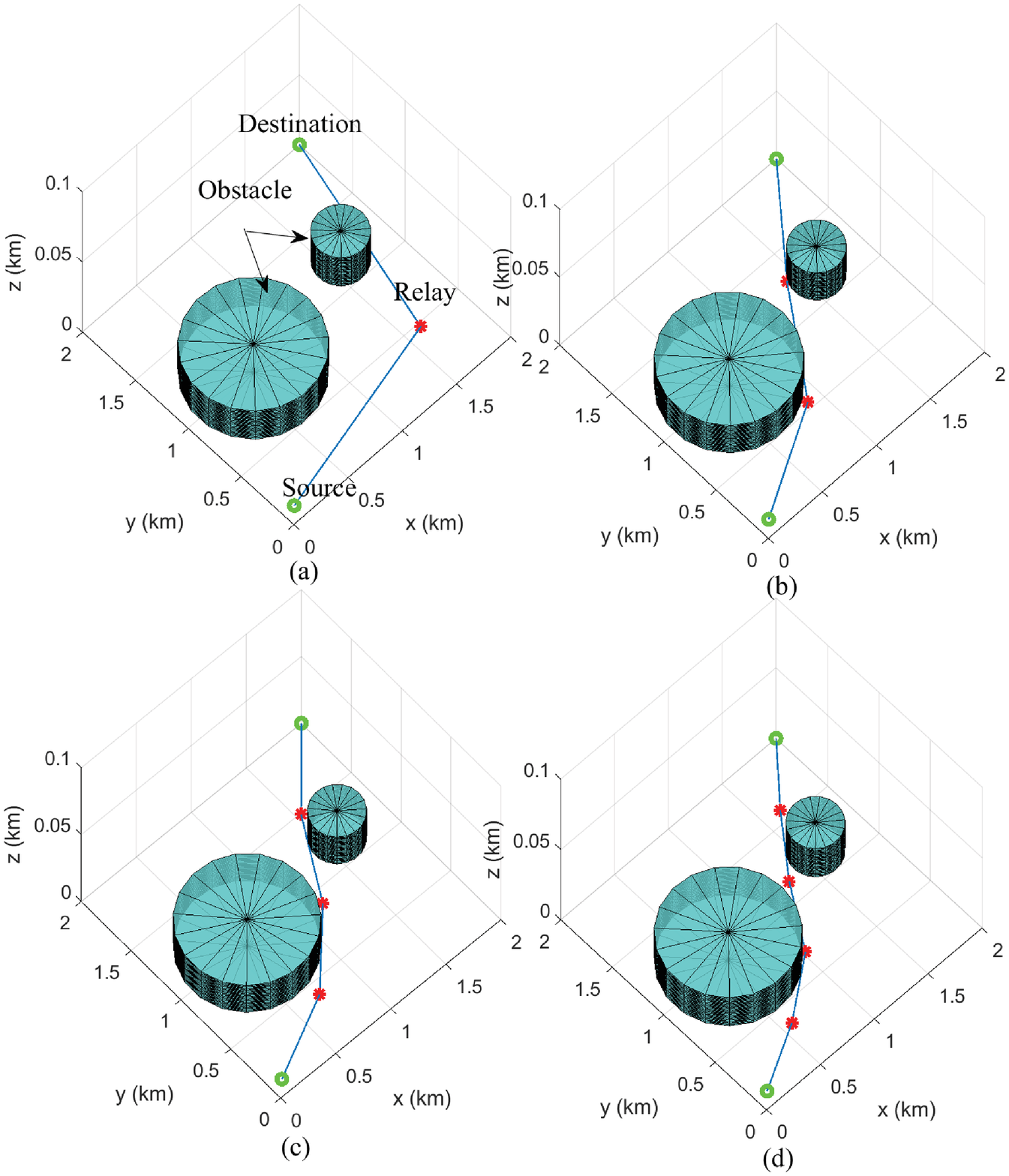}
\caption{Optimal relay deployment for the UAV-based FSO relaying system with two cylindrical obstacles. (a) one relay, (b) two relays, (c) three relays, (d) four relays}
\label{fig11}
\end{figure}

\begin{table}
\renewcommand{\arraystretch}{1}
\caption{The optimal relay locations and link distances in Fig. \ref{fig11}}
\label{tab3}
\centering
\begin{tabularx}{9cm}{|p{1.2cm}|X|X|}
\hline
\textbf{Subfigures} &  \textbf{Relay locations (in km)} & \textbf{Link distances (in km)} \\
\hline
Fig. \ref{fig11}(a) & (1.2712, 0.8288) & 1.3795, 1.3795 \\
\hline
Fig. \ref{fig11}(b) & (0.8853, 0.5611), (1.4676, 1.2612) & 0.9106, 0.9106, 0.9106\\
\hline
Fig. \ref{fig11}(c) & (0.7057, 0.4191), (1.2109, 0.8811), (1.6055, 1.4405)& 0.6846, 0.6846, 0.6846, 0.6846\\
\hline
Fig. \ref{fig11}(d) & (0.5731, 0.3741), (1.0221, 0.6837), (1.3437, 1.1259), (1.6729, 1.5626) & 0.5468, 0.5454, 0.5468, 0.5468, 0.5462\\
\hline
\end{tabularx}
\end{table}

\section{Conclusions}
\label{section6}
This paper investigated the channel modelling, outage probability analysis, and parameter optimization for UAV-based FSO relaying systems.
A tractable and accurate channel model was established by taking into account atmospheric loss, atmospheric turbulence, pointing error, and link interruption due to AoA fluctuation.
Based on the channel model, closed-form expressions were obtained for link outage probability and end-to-end outage probability.
Subsequently, the asymptotic outage performance bounds were also investigated.
Moreover, the beam width, FoV and UAVs' locations were optimized.
Numerical results verify the accuracy of the derived theoretical expressions and the efficiency of the proposed optimization schemes.
The obtained theoretical expressions enable designers to evaluate outage performance rapidly without time-intensive simulations.
The derived parameter optimization results can help determine the optimal available parameter choices when designing the UAV-based FSO systems.

\numberwithin{equation}{section}
\appendices
\section{Proof of \emph{Theorem \ref{the0}}}
\label{appa}
\renewcommand{\theequation}{A.\arabic{equation}}
By using \eqref{equ3}, \eqref{equ4}, \eqref{equ8} and expressing ${K_n}( \cdot )$ in terms of the Meijer's G-function \cite{add1}, we can obtain the PDF of $h_i^{'}$ as \cite{ShM}
\begin{eqnarray}
f_{h_{i}^{\prime}}\left(h_{i}^{\prime}\right) \!\!\!\!&=&\!\!\!\! \frac{{{\alpha _i}{\beta _i}\zeta _i^2}}{{{A_i}h_i^{({\rm{l}})}\Gamma ({\alpha _i})\Gamma ({\beta _i})}}\nonumber\\
&\times&\!\!\!\! G_{1,3}^{3,0}\!\!\left[ {\frac{{{\alpha _i}{\beta _i}}}{{{A_i}h_i^{{\rm{(l}})}}}{h_{i}^{\prime}}\! \left|\!\! {\begin{array}{*{20}{c}}
{\zeta _i^2}\\
{\zeta _i^2 - 1,{\alpha _i} - 1,{\beta _i} - 1}
\end{array}} \right.} \!\!\!\! \right]\!.
\label{equ9}
\end{eqnarray}

By substituting \eqref{equ11} and \eqref{equ9} into \eqref{equ12}, we derive the PDF of ${h_i}$ as
\begin{eqnarray}
{f_{{h_i}}}\!\! \left( {{h_i}} \right) \!\!&=&\!\! \exp \left( \!\!{ - \frac{{\theta _{{\rm{FoV}},i}^2}}{{2m_i\sigma _{{\rm{angle}},{\rm{u}}}^{\rm{2}}}}} \right)\int_0^\infty  {\frac{1}{{h_i^\prime }}\delta \left( {\frac{{{h_i}}}{{h_i^\prime }}} \right){f_{h_i^\prime }}\left( {h_i^\prime } \right){\rm{d}}h_i^\prime } \nonumber \\
 &+&\!\! \left[ {1\! -\! \exp \!\left( { \!-\! \frac{{\theta _{{\rm{FoV}},i}^2}}{{2m_i\sigma _{{\rm{angle}},{\rm{u}}}^{\rm{2}}}}} \right)} \right]\nonumber\\
 &\times&\int_0^\infty  {\!\!\frac{1}{{h_i^\prime }}\delta \left( {\frac{{{h_i}}}{{h_i^\prime }}\!\! -\!\! 1} \right){f_{h_i^\prime }}\left( {h_i^\prime } \right){\rm{d}}h_i^\prime } .
 \label{PDF1}
\end{eqnarray}
With $\delta(ax)=\delta(x) / {|a|} $ and $\int_{ - \infty }^\infty  f (a)\delta (x - a){\rm d}a = f(x)$ \cite{c1}, eq. (\ref{equ13}) can be derived.

\section{Proof of \emph{Theorem \ref{the1}}}
\label{appb}
\renewcommand{\theequation}{B.\arabic{equation}}
By \eqref{equ13} and \eqref{equ16}, $p_i$ is given by
\begin{eqnarray}
{p_i} \!\!\!\!\! &=&\!\!\!\!\! \exp\left( { - \frac{{\theta _{{\rm{FoV}},i}^2}}{{2m_i\sigma _{{\rm{angle,u}}}^{\rm{2}}}}}\! \right)\!\!
\underbrace {\int_0^{{h_{{\rm{th}},i}}}\delta ({h_i}) {\rm d} h_i}_{=1} \nonumber\\
&+& \!\!\!\!\! \left[\! {1 \!- \!\exp\! \left(\! \!{ - \frac{{\theta _{{\rm{FoV}},i}^2}}{{2m_i\sigma _{{\rm{angle,u}}}^{\rm{2}}}}}\! \right)}\!\! \right]\frac{{{\alpha _i}{\beta _i}\zeta _i^2}}{{{A_i}h_i^{({\rm{l}})}\Gamma ({\alpha _i})
\Gamma ({\beta _i})}} \nonumber \\
&\times&\! \!\!\!\!\!\underbrace {\int_0^{{h_{{\rm{th}},i}}} \!\!\! G_{1,3}^{3,0}\!\!\left[\! {\frac{{{\alpha _i}{\beta _i}}}{{{A_i}h_i^{{\rm{(l}})}}}{h_i} \left|\!\! {\begin{array}{*{20}{c}}
{\zeta _i^2}\\
{\zeta _i^2 \!-\! 1,{\alpha _i} \!-\! 1,{\beta _i} \!-\! 1}
\end{array}} \right.}\!\!\!\! \right]{\rm d} h_i}_{\triangleq I}.
\label{26}
\end{eqnarray}

By (26) in \cite{add1}, $I$ in (\ref{26}) can be further written as
\begin{equation}
I \!=\!{h_{{\rm{th,}}i}} G_{2,4}^{3,1}\!\left[\! {\frac{{{\alpha _i}{\beta _i}}}{{{A_i}h_i^{({\rm{l)}}}}}{h_{{\rm{th,}}i}}\left| {\begin{array}{*{20}{c}}
{0,\zeta _i^2 }\\
{\zeta _i^2\!-\!1,{\alpha _i}\!-\!1,{\beta _i}\!-\!1,-1}
\end{array}} \right.}\!\!\!\! \right]\!\!.\!\!\!\!
\label{27}
\end{equation}
Then, by (9.31.5) in \cite{FSO9}, eq. (\ref{27}) can be written as
\begin{equation}
I \!=\!\frac{{{A_i}h_i^{({\rm{l)}}}}}{{{\alpha _i}{\beta _i}}} G_{2,4}^{3,1}\left[ {\frac{{{\alpha _i}{\beta _i}}}{{{A_i}h_i^{({\rm{l)}}}}}{h_{{\rm{th,}}i}}\left| {\begin{array}{*{20}{c}}
{1,\zeta _i^2+1 }\\
{\zeta _i^2,{\alpha _i},{\beta _i},0}
\end{array}} \right.} \right].
\label{28}
\end{equation}
Submitting (\ref{28}) and (\ref{equ17}) into (\ref{26}), we obtain (\ref{equ18}).

\section{Proof of \emph{Theorem \ref{the2}}}
\label{appc}
\renewcommand{\theequation}{C.\arabic{equation}}
According to (\ref{equ17}), when the transmit power ${P_{\rm{t}}}$ tends to infinity, we have
 \begin{equation}
\mathop {\lim }\limits_{{P_{\rm{t}}} \to \infty } {h_{{\rm{th,}}i}} = \mathop {\lim }\limits_{{P_{\rm{t}}} \to \infty }\sqrt {\frac{{{\Upsilon _{{\rm{th}}}}\sigma _{{\rm{n,}}i}^{\rm{2}}}}{2R^2{P_{\rm{t}}^2}}}=0.
\label{30}
\end{equation}

In (\ref{equ18}), the Meijer's G-function cannot provide intuitive insights on the behavior of ${h_{{\rm{th,}}i}}$.
For a large transmit power $P_{\rm t}$, the outage probability is dominated by the behavior of the PDF near the origin.
Therefore, by employing (07.34.06.0006.01) in \cite{FSO12} or \cite{FSO11}, we can approximate (\ref{equ18}) as
\begin{eqnarray}
&&\!\!\!\!\!\!\!\!\!\!\!\!\!\!\!\!\!\!\!\!\!\!\!\! {p_i}\approx \exp\! \left(\!\! { - \frac{{\theta _{{\rm{FoV}},i}^2}}{{2m_i\sigma _{{\rm{angle,u}}}^{\rm{2}}}}}\!\! \right)\! +\!\! \left[\! {1\! - \!\exp \left( \!\!{ - \frac{{\theta _{{\rm{FoV}},i}^2}}{{2m_i\sigma _{{\rm{angle,u}}}^{\rm{2}}}}} \right)}\!\! \right]\nonumber \\
&&\!\!\!\!\!\!\!\!\!\!\!\!\!\!\!\!\!\! \times \frac{{\zeta _i^2\Gamma ({\alpha _i} \!-\! {\kappa _i}){b_i}}}{{\Gamma ({\alpha _i})\Gamma ({\beta _i}){\kappa _i}}}\!{\left(\! \! {\frac{{{\alpha _i}{\beta _i}}}{{{A_i}h_i^{({\rm{l}})}}}}\!\! \right)^{\!\!{\kappa _i}}}\!\!\!(h_{{\rm{th,}}i})^{\kappa _i} ,{\kappa _i}\! = \!\min (\zeta _i^2,{\beta _i}),
\label{equ20}
\end{eqnarray}
where ${b_i} = 1/(\zeta _i^2 - {\beta _i})$ if $\zeta _i^2 > {\beta _i}$, and ${b_i} = \Gamma ({\beta _i} - \zeta _i^2)$ if $\zeta _i^2 < {\beta _i}$. Because the small-scale turbulence eddies parameter $\beta_i  > 1$ \cite{FSO14}, the inequality ${\kappa _i} > 1$ always holds.

By (\ref{30}), the second term in \eqref{equ20} tends to zero when $P_{\rm t}$ tends to infinity. Therefore,  \emph{Theorem {\ref{the2}}} holds.


\end{document}